\newcommand{\longversion}[1]{#1}
\newcommand{\shortversion}[1]{}
\definecolor{thechosenone}{rgb}{0,0.1,0.5}
\title{Minimizing Envy and Maximizing Happiness in Graphical House Allocation} 
\author{Anubhav Dhar}{Indian Institute of Technology Kharagpur, India}{anubhavldhar@gmail.com}{https://orcid.org/0009-0006-5922-8300}{}
\author{Ashlesha Hota}{Indian Institute of Technology Kharagpur, India}{ashleshahota@gmail.com}{https://orcid.org/0009-0009-8805-4583}{}
\author{Palash Dey}{Indian Institute of Technology Kharagpur, India \and \url{https://cse.iitkgp.ac.in/~palash/} }{palash.dey@cse.iitkgp.ac.in}{https://orcid.org/0000-0003-0071-9464}{}
\author{Sudeshna Kolay}{Indian Institute of Technology Kharagpur, India \and \url{https://cse.iitkgp.ac.in/~skolay/} }{skolay@cse.iitkgp.ac.in}{https://orcid.org/0000-0002-2975-4856}{}
\authorrunning{P. Dey, A. Dhar, A. Hota, S. Kolay} 
\keywords{fair allocation, balanced separators, envy-freeness} 
\definecolor{thechosenone}{rgb}{0,0.1,0.5}
\newcommand{\CLIQUE}{{{CLIQUE}}}
\newcommand{\ohaab}{{\sc Annnotated Optimal House Allocation of Agent Network}\xspace}
\newcommand{\defproblem}[3]{
  \vspace{1mm}
\begin{center}
\noindent\fbox{

  \begin{minipage}{0.95\textwidth}
  \begin{tabular*}{\textwidth}{@{\extracolsep{\fill}}l} \textsc{\underline{#1}} \\ \end{tabular*}\vspace{1ex}
  {\bf{Input:}} #2  \\
  {\bf{Question:}} #3
  \end{minipage}

  }
\end{center}
  \vspace{1mm}
}
\newdimen\prevdp
\def\leftlabel#1{\noalign{\prevdp=\prevdepth
   \kern-\prevdp\nointerlineskip\vbox to0pt{\vss\hbox{\ensuremath{#1}}}\kern\prevdp}}
\newcommand{\NP}{\ensuremath{\mathsf{NP}}\xspace}
\newcommand{\NPH}{\ensuremath{\mathsf{NP}}\text{-hard}\xspace}
\newcommand{\tw}{\text{tw}\xspace}
\let\oldlambda\lambda
\renewcommand{\lambda}{\ensuremath{\oldlambda}\xspace}
\let\oldalpha\alpha
\renewcommand{\alpha}{\ensuremath{\oldalpha}\xspace}
\let\oldDelta\Delta
\renewcommand{\Delta}{\ensuremath{\oldDelta}\xspace}
\newcommand{\ohaa}{{\sc Optimal House Allocation of Agent Network}\xspace}
\newcommand{\ohaah}{{\sc Optimally Happy House Allocation of Agent Network}\xspace}
\newcommand{\ohha}{{\sc Optimal House Network Allocation}\xspace}
\newcommand{\ehaa}{{\sc Egalitarian House Allocation of Agent Network}\xspace}
\newcommand{\ehha}{{\sc Egalitarian House Network Allocation}\xspace}
\newcommand{\uhaa}{{\sc Utilitarian House Allocation of Agent Network}\xspace}
\newcommand{\uhha}{{\sc Utilitarian House Network Allocation}\xspace}
\renewcommand{\AA}{\ensuremath{\mathcal A}\xspace}
\newcommand{\EE}{\ensuremath{\mathcal E}\xspace}
\newcommand{\FF}{\ensuremath{\mathcal F}\xspace}
\newcommand{\GG}{\ensuremath{\mathcal G}\xspace}
\newcommand{\HH}{\ensuremath{\mathcal H}\xspace}
\newcommand{\OO}{\ensuremath{\mathcal O}\xspace}
\newcommand{\PP}{\ensuremath{\mathcal P}\xspace}
\newcommand{\eps}{\ensuremath{\varepsilon}\xspace}
\renewcommand{\epsilon}{\eps}
\newcommand{\ignore}[1]{}
\renewcommand{\ge}{\geqslant}
\renewcommand{\le}{\leqslant}
\setlist[enumerate]{labelwidth=!, labelindent=0pt}
\algnewcommand\algorithmicinput{\textbf{Input:}}
\algnewcommand\INPUT{\item[\algorithmicinput]}
\algnewcommand\algorithmicoutput{\textbf{Output:}}
\algnewcommand\OUTPUT{\item[\algorithmicoutput]}
\algnewcommand{\LineComment}[1]{\State \(\triangleright\) #1}
\definecolor{cblue}{RGB}{0, 0, 128}
\crefname{theorem}{Theorem}{\bf Theorems}
\crefname{observation}{Observation}{\bf Observations}
\crefname{corollary}{Corollary}{\bf Corollary}
\crefname{lemma}{Lemma}{\bf Lemmata}
\crefname{corollary}{Corollary}{\bf Corollaries}
\crefname{proposition}{Proposition}{\bf Propositions}
\crefname{definition}{Definition}{\bf Definitions}
\crefname{claim}{Claim}{\bf Claims}
\crefname{reductionrule}{Reduction rule}{\bf Reduction rules}
\begin{document}

\maketitle

\begin{abstract}
In this paper, we study a generalization of the {\sc House Allocation} problem. In our problem, agents are represented by vertices of a graph $\GG_{\mathcal{A}} = (\AA, E_\AA)$, and each agent $a \in \AA$ is associated with a set of preferred houses $\PP_a \subseteq \HH$, where $\AA$ is the set of agents and $\HH$ is the set of houses. A house allocation is an injective function $\phi: \AA \rightarrow \HH$, and an agent $a$ envies a neighbour $a' \in N_{\GG_\AA}(a)$ under $\phi$ if $\phi(a) \notin \PP_a$ and $\phi(a') \in \PP_a$. We study two natural objectives: the first problem called \ohaa, aims to compute an allocation that minimizes the number of envious agents; the second problem called \ohaah aims to maximize, among all minimum-envy allocations, the number of agents who are assigned a house they prefer. These two objectives capture complementary notions of fairness and individual satisfaction.

Our main results provide a comprehensive algorithmic and complexity-theoretic understanding of these objectives. First, we design polynomial time algorithms for both problems for the variant when each agent prefers exactly one house. On the other hand, when the list of preferred houses for each agent has size at most $2$ then we show that both problems are \NP-hard even when the agent graph $\GG_\AA$ is a complete bipartite graph. We also show that both problems are \NP-hard even when the number $|\mathcal H|$ of houses is equal to the number $|\mathcal A|$ of agents. This is in contrast to the classical {\sc House Allocation} problem, where the problem is polynomial time solvable when $|\mathcal H| = |\mathcal A|$. The two problems are also \NP-hard when the agent graph has a small vertex cover. On the positive side, we design exact algorithms that exploit certain structural properties of $\GG_{\AA}$ such as sparsity, existence of balanced separators or existence of small-sized vertex covers, and perform better than the naive brute-force algorithm. 
\end{abstract}

\newpage
\section{Introduction}\label{sec:intro}

The challenge of allocating limited resources in a fair and efficient manner arises in a wide variety of real-world contexts, from assigning dorm rooms to students, to distributing public housing, to allocating virtual machines in cloud systems. A particularly well-studied problem in this area is the {\sc House Allocation} problem, in which we are given a set $\mathcal A$ of agents and a set $\mathcal H$ of houses. An allocation of houses to agents is simply an injective function from agents to houses. We are also given for each agent $a$ a list $\mathcal{P}_a$ of preferred houses. The agent is satisfied if they are allocated a house from $\mathcal{P}_a$. On the other hand, if agent $a$ is not allocated a house from $\mathcal{P}_a$, then they will envy any other agent $b$ who gets allocated a house from $\mathcal{P}_a$. The objective is the find an allocation of houses to agents that minimizes the number of envious agents.

Ideally, we would like to obtain an allocation of houses to agents such that no agent is envious. Unfortunately, envy-free allocations are not guaranteed to exist, even when the number of agents and the number of houses are equal. 

Notice that the House Allocation problem assumes that agents have global information about all other agents, which may be impractical in the real world. It is more likely that agents have information about the allocation of other agents that they know of, like their friends or co-workers. Thus, in this paper we study a generalization of the {\sc House Allocation} problem, called the \ohaa (also referred to as the {\sc Graphical House Allocation} problem in literature). In this problem, we have an agent graph $\GG_{\mathcal{A}}=(\mathcal A, E_{\mathcal A})$. In our variant, an agent $a$ is only envious of an agent $b$ if all three of the following conditions are satisfied: (i) $a$ is not allocated a house from $\mathcal{P}_a$, (ii) $b$ is allocated a house from $\mathcal{P}_a$ and (iii) $b$ is a neighbour of $a$ in the graph $\GG_{\mathcal A}$. Now, the question is to find an allocation that minimizes the number of envious agents. Note that when the agent graph $\GG_{\mathcal{A}}$ is a complete graph then \ohaa is exactly the {\sc House Allocation} problem.

There is also a notion of happy agents in allocation problems. An agent $a$ is happy when they are allocated a house from $\mathcal{P}_a$. Note that an agent $a$ may both be non-envious of any other agent as well as unhappy. This could happen if no house in $\mathcal{P}_a$ is allocated to any agent in the closed neighbourhood of $a$ in the graph $\GG_{\mathcal{A}}$. While reduction of envy is important, it is also important to maximize the number of agents that are happy. A concrete application arises in university dormitory assignments, where students submit room preferences and are socially connected through shared courses or clubs. In such settings, it is crucial to limit perceived unfairness among friends while ensuring that as many students as possible are assigned to desirable rooms. This dual criterion captures realistic concerns of both equity and satisfaction in constrained allocation environments. Therefore, our second problem, \ohaah asks for an allocation that, among all minimum envy allocations, has the maximum number of happy agents. 

It is easy to see that any NP-hardness result of \ohaa also translates to NP-hardness of \ohaah. Also, any algorithmic result of \ohaah translates to an algorithm for \ohaa. Now, we describe our results.

\subsection{Contributions}

First, we consider both \ohaa and \ohaah when each agent prefers exactly one house. In other words, for each agent $a$, $|\mathcal{P}_a| = 1$. We design polynomial time algorithms for both problems under this variant. The details of this are discussed in Section~\ref{sec:asp-graph}, as Theorem~\ref{thm:d-is-1}.

Next, we show that even when the preference lists of each agent is at most $2$, the problem becomes \NPH. It is to be noted that Madathil et. al~\cite{madathil2024cost} proved NP-hardness for {\sc House Allocation} (\ohaa on agent graphs which are complete graphs) when the preference list of each agent is at most $2$. Our NP-hardness result for \ohaa holds even when the agent graph is a $3$-regular graph. The same NP-hardness result shows that \ohaa, and consequently \ohaah, is \NPH even when the number $|\mathcal{A}| = n$ of agents is equal to the number $|\mathcal{H}| = m$ of houses. This is in stark contrast to the {\sc House Allocation} problem, which can be solved in polynomial time when $n = m$. These proofs appear in detail as Theorem~\ref{thm:bip-d-2} and Theorem~\ref{thm:3reg-hard}, in Section~\ref{sec:asp-graph}.

Thus, we turn to exact algorithms for \ohaa and \ohaah. Note that there is a trivial brute force algorithm that runs in time $2^{\OO(n \log m)}$, where $n$ is the number of agents and $m$ is the number of houses. First, we design an algorithm for \ohaa that on input $\GG_{\mathcal{A}} = (\mathcal A, E_{\mathcal A})$ and $\mathcal H$ solves the problem in time $2^{|\mathcal{A}|+2|E_{\mathcal{A}}|}\cdot (|\mathcal{A}|+|\mathcal H|)^{\OO(1)}$. This algorithm performs better than the naive algorithm for all graphs when $|E_{\mathcal{A}}|=o(|\mathcal A| \log |\mathcal H|)$. The details of this algorithm can be found in Theorem~\ref{thm:exp-edge}, Section~\ref{sec:exact}.

Next, we look at other structural properties of the agent graph and design exact algorithms that utilize the structure. We design an alternate algorithm that has the best running time when the agent graph comes from a class of graphs that have small balanced separators. For example, this algorithm is the most efficient when the agent graph is a planar graph or a bounded genus graph. The details of this algorithm can be found in Theorem~\ref{thm:sep-rec}, Section~\ref{sec:exact}.

We further design an exact exponential algorithm for \ohaa for agent graphs with bounded vertex cover size. If the size of the vertex cover of the agent graph is $(\log n) ^{\OO(1)}$ then this algorithm is the fastest algorithm. On the other hand, we also show that \ohaa is \NPH even when the agent graph is a split graph or a complete bipartite graph with maximum vertex cover size at most $n^{\varepsilon}$, where $n$ is the number of agents and $\varepsilon \in (0,1)$ is a constant - therefore, we do not expect to obtain quasi-polynomial time algorithms for the problem when the vertex cover size of the agent graph is $n^{\varepsilon}$.


\subsection{Related Work}

The {\sc House Allocation} problem and its various extensions have been extensively studied across economics and computer science~\cite{abraham2004pareto,beynier2019local,dai2024weighted,gerards2025envy,hosseini2023tight,kamiyama2021complexity}, particularly in contexts that emphasize fairness and efficiency. We discuss the most relevant prior works to our setting.

Abdulkadiro\u{g}lu and S\"onmez~\cite{abdulkadirouglu1999house} introduce the \emph{House Allocation with Existing Tenants} model, which captures scenarios where some agents already occupy houses and may choose to keep them or participate in reallocation. They propose the \emph{Top Trading Cycles (TTC)} mechanism, which guarantees allocations that are Pareto efficient, individually rational, and strategy-proof. Their model generalizes the classical house allocation setting by incorporating endowments and offering mechanisms that incentivize participation without loss.
Focusing on the classical {\sc House Allocation} problem under ordinal preferences, Gan, Suksompong, and Voudouris~\cite{gan2019envy} develop a polynomial-time algorithm to determine whether an envy-free allocation exists when each agent ranks houses. They explore conditions under which envy-free allocations can or cannot be guaranteed and highlight cases where approximations or partial allocations become necessary. Their work contributes foundational results on the computational aspects of envy-freeness under structured preference inputs and motivates the need for relaxed fairness notions studied in later works.

Hosseini et al.~\cite{hosseini2023graphical,hosseini2024graphical} introduce the {\sc Graphical House Allocation} model, wherein agents are vertices in a graph, and envy is permitted only along the graph’s edges. Assuming identical valuation functions across agents, they focus on minimizing the \emph{total envy}, measured as the sum of envy across edges. They show that the problem generalizes the classical linear arrangement problem and establish hardness results even for simple graphs like disjoint paths and stars. To address tractability, they develop fixed-parameter algorithms based on graph properties like separability and provide polynomial-time solutions for special families such as paths and cycles. Choo, Ling, Suksompong, Teh, and Zhang~\cite{choo2024envy} address the challenge of achieving envy-freeness in settings where such allocations may not naturally exist. They propose the use of \emph{subsidies}---monetary compensations that accompany house allocations---to compute envy-free outcomes while minimizing the total subsidy required. They show that the problem is NP-hard in general but solvable in polynomial time when the number of houses differs from the number of agents by a constant, or when agents have identical utilities.

Madathil, Misra, and Sethia.~\cite{madathil2024cost} provide a comprehensive study on minimizing envy in {\sc House Allocation}, introducing three formal measures of envy---number of envious agents, maximum envy, and total envy. They define and analyze the corresponding computational problems (Optimal House Allocation, Egalitarian House Allocation, and Utilitarian House Allocation), establish strong hardness results even under restricted settings like binary preferences and bounded degrees, and design polynomial-time algorithms for structured instances such as extremal preferences. They also present fixed-parameter tractable algorithms based on agent and house types, formulate ILP models with empirical evaluations, and study the trade-off between fairness and welfare through the Price of Fairness, offering tight bounds and identifying cases where fairness incurs no welfare loss.

\section{Preliminaries}\label{sec:prelims}

Let \(\AA\) be a set of \(n\) agents and \(\HH\) be a set of \(m\) houses. Each agent \(a \in \AA\) is associated with a set of \emph{preferred houses} \(\PP_a \subseteq \HH\). Agent $a$ prefers houses in \(\PP_a\) while disliking the rest. A \emph{house allocation} is a function \(\phi: \AA \rightarrow \HH\) that is injective, meaning no two agents are assigned the same house. In this paper, $|\mathcal A|$ and $|\mathcal{H}|$ shall also be denoted by $n$ and $m$, respectively. 

We denote by \(\GG_\AA = (\AA, E_\AA)\) a graph over the agent set \(\AA\), modeling the social or informational structure among agents. An agent \(\AA\) is said to \emph{envy} a neighbour \(a' \in N_{\GG_\AA}(a)\) under an allocation \(\phi\) if \(\phi(a) \notin \PP_a\) and \(\phi(a') \in \PP_a\); that is, \(\AA\) does not receive a house they like, while their neighbour does. Let \(\EE_\AA^\phi(a)\) denote the set of agents in \(N_{\GG_\AA}(a)\) whom agent \(\AA\) envies under allocation \(\phi\), i.e.,

\(
\EE_\AA^\phi(a) = \{ a' \in N_{\GG_\AA}(a) \mid \phi(a) \notin \PP_a \text{ and } \phi(a') \in \PP_a \}.
\)

We call an agent $a \in \AA$ \emph{envious} if $\left|\EE_\AA^\phi(a)\right| \ge 1$, otherwise, $a$ is said to be \emph{non-envious} or \emph{envy-free}. We call an agent $a \in \AA$ \emph{happy}, if $\phi(a) \in \PP_a$. The \emph{envy} (resp. \emph{happiness}) of the allocation $\phi$ is the total number of envious (resp. happy) agents for the allocation $\phi$. A house is called a \emph{dummy} house, if it is not preferred by any agent; otherwise we call it a \emph{non-dummy} house.

We will use the notation $[t]$ to denote the set $\{1, 2, \ldots, t\}$. For any function $f: X \to Y$, for a subset $X' \subseteq X$, we denote by $f(X')$ the set of images of $X'$, i.e. $f(X') = \{f(x) \mid x \in X'\}$.

We are interested in computing allocations \(\phi\) that minimize various measures of envy in this graphical setting. Below, we define two optimization problems under this natural perspective capturing envy across agent relationships (defined by \(\GG_\AA\)). For completeness, we define each of these problems as follows:

\defproblem{\ohaa}{
A set \(\AA\) of \(n\) agents, a graph \(\GG_\AA\) on \(\AA\), a set \(\HH\) of \(m\) houses, and agent preferences \((\PP_a)_{a \in \AA}\).}{
Compute an allocation \(\phi\) that minimizes \(\left| \{ a \in \AA \mid \EE_\AA^\phi(a) \ne \emptyset \} \right|\).
}

\defproblem{\ohaah}{
A set \(\AA\) of \(n\) agents, a graph \(\GG_\AA\) on \(\AA\), a set \(\HH\) of \(m\) houses, and agent preferences \((\PP_a)_{a \in \AA}\).}{
Compute an allocation \(\phi\) that minimizes \(\left| \{ a \in \AA \mid \EE_\AA^\phi(a) \ne \emptyset \} \right|\), and among all such minimum-envy allocations, maximizes \(\left| \{ a \in \AA \mid \phi(a) \in \PP_a \} \right|\).
}

Note that \ohaah is at least as hard as \ohaa, as any solution to \ohaah is also a solution to \ohaa. Whenever we provide algorithms, we analyze this for both \ohaa and \ohaah. However, for hardness results, we will only focus on \ohaa, as these results would automatically translate to \ohaah as well.

Unless otherwise stated, we assume that preferences are represented extensionally by the preferred set \(\PP_a\) for each agent \(\AA\), and that all agents receive exactly one, distinct, house. We also assume that envy is only locally defined over \(\GG_\AA\), capturing realistic scenarios where agents can compare only within their social or spatial neighbourhoods. Throughout the paper, we use the following additional notation. For any agent \(a \in \AA\), we write \(\deg(a)\) to denote the degree of \(\AA\) in the graph \(\GG_\AA\). We denote by $d = \max \limits_{a \in \AA} |\PP_a|$, the maximum number of houses that a single agent prefers. 
An allocation is said to be \emph{zero-envy} if no agent envies any of their neighbours, i.e., \(\EE_A^\phi(a) = \emptyset\) for all \(a \in \AA\). 
Finally, for any house \(h \in \HH\), we use \(\phi^{-1}(h)\) to denote the agent assigned to house \(\HH\) under \(\phi\), if any.

\section{Introducing an Agent Graph: Contrast with the Classical Setting}\label{sec:asp-graph}

Notice that the House Allocation problem studied by Madathil et. al~\cite{madathil2024cost} is a special case of \ohaa where the agent graph is a complete graph. In this section, we provide results for \ohaa that depict that this has a much more interesting complexity landscape in comparison to the {\sc House Allocation} problem.

We first look into a tractable variant of {\sc House Allocation} that remains tractable for \ohaa . Madathil et. al~\cite{madathil2024cost} show that when each agent prefers exactly one house, \ohaa on complete graphs is polynomial-time solvable. We start by extending this result to solving \ohaa on arbitrary graphs in polynomial time, when each agent prefers exactly one house i.e. $d=1$.

\begin{theorem}\label{thm:d-is-1}
    There is a polynomial-time algorithm for \ohaa,  when every agent prefers exactly one house ($d=1$).
\end{theorem}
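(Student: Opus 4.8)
The plan is to reduce \ohaa with $d=1$ to a minimum-cost bipartite matching (assignment) problem. Since $|\PP_a|=1$ for every agent, write $\PP_a=\{h_a\}$. The crucial observation is that an agent $a$ is envious under $\phi$ if and only if $a$ does not receive its unique preferred house (i.e.\ $\phi(a)\neq h_a$) and that house is received by some neighbour of $a$ (i.e.\ $\phi^{-1}(h_a)\in N_{\GG_\AA}(a)$). Thus the envy status of $a$ is determined entirely by which agent, if any, receives the single house $h_a$.

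Building on this, for each house $h$ let $S_h=\{a\in\AA : h_a=h\}$ be the set of agents preferring $h$. If $h$ is assigned to an agent $b$, then the agents of $S_h$ that become envious on account of $h$ are exactly those in $S_h\cap N_{\GG_\AA}(b)$ (note $b\notin N_{\GG_\AA}(b)$, so if $b\in S_h$ it is happy rather than envious), whereas if $h$ is left unassigned no agent of $S_h$ is envious. Since each agent has a unique preferred house, summing over houses counts each envious agent exactly once, so for any allocation $\phi$ viewed as a matching $M$ (pairing each agent with the house it receives),
\[
\bigl|\{a\in\AA : \EE_\AA^\phi(a)\neq\emptyset\}\bigr| \;=\; \sum_{(b,h)\in M} \bigl|S_h\cap N_{\GG_\AA}(b)\bigr|.
\]
I would prove this decomposition as the main lemma; verifying the absence of double counting and the correct (zero) contribution of unassigned houses is the step requiring the most care, since it is precisely what turns a seemingly global envy objective into an additive cost over matched pairs.

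With this identity in hand, the problem becomes transparent. Assuming $m\ge n$ (otherwise no injective allocation exists), build the complete bipartite graph between $\AA$ and $\HH$, assign to each pair $(b,h)$ the cost $c(b,h)=|S_h\cap N_{\GG_\AA}(b)|$, and compute a minimum-cost matching saturating $\AA$ using the Hungarian algorithm or a min-cost flow formulation. By the identity above, such a matching is exactly a minimum-envy allocation, and all costs are computable in $(n+m)^{\OO(1)}$ time, giving an overall polynomial-time algorithm.

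Finally, for the happiness refinement one would extend the cost lexicographically: replace $c(b,h)$ by $(n+1)\,c(b,h)$ and subtract $1$ from the cost of every edge of the form $(b,h_b)$, whose selection makes $b$ happy. Because the total happiness is at most $n$ while each unit of envy now costs $n+1$, a minimum-cost matching under this modified cost first minimizes the number of envious agents and, among all such allocations, maximizes the number of happy agents, thereby solving \ohaah in the same running time.
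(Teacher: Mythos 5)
Your proposal is correct and follows essentially the same route as the paper: your cost $c(b,h)=|S_h\cap N_{\GG_\AA}(b)|$ is exactly the paper's weight $w(b,h)=|Y_{b,h}|$, your additive decomposition of envy over matched edges (using uniqueness of the preferred house to avoid double counting) is the paper's key claim, and the reduction to minimum-cost bipartite matching is identical. Your happiness refinement is also the paper's, up to scaling everything by $n+1$ (the paper instead subtracts $\frac{1}{n+1}$ from happy edges), which merely keeps the costs integral.
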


\begin{proof}

    Consider the instance $(\AA, \HH, \GG(\AA, E_\AA), (\PP_a)_{a \in \AA})$ of \ohaa. Since each agent $a \in \AA$ likes exactly one house, $a$ can be envious of at most one other agent. That is, $|\EE_\AA^\phi(a)| \le 1$. We will crucially use this observation to construct a polynomial-time algorithm.

    Allocating house $h \in \HH$ to agent $a \in \AA$ causes the set of agents $Y_{a,h} = \{a' \in N_{\GG_\AA}(a) \mid \PP_{a'} = \{h\}\}$ to be envious of $a$. Note that in such a case, the agents in $Y_{a,h}$ are not envious of any other agents. This observation would be crucially used to avoid double counting of envious agents. 


    We now construct a weighted complete bipartite graph $\GG_\FF$ with vertex partitions $\AA$ and $\HH$. The weights $w(a, h)$ for $a \in \AA$, $h \in \HH$ are defined as $w(a, h) = |Y_{a,h}|$. 
    The motivation behind defining such a weight function is to model through $w(a, h)$, the number of agents that become envious on assigning house $h \in \HH$ to agent $a \in \AA$. We finally use a polynomial-time minimum cost maximum bipartite matching algorithm to get a minimum-envy assignment. We formally prove this as follows.

    \begin{claim}\label{clm:d-is-1-matching}
        A minimum-envy allocation has the number of envious agents equal to the cost of the minimum cost maximum matching of $\GG_\FF$. The matching itself corresponds to such an allocation. 
    \end{claim}
    \begin{proof}[Proof of Claim~\ref{clm:d-is-1-matching}]
        It suffices to show that every maximum matching in $\GG_\FF$ has a bijective relation with a allocation of houses to agents, where the cost of the matching is equal to the number of envious agents in the allocation. 

        Let $M$ be a maximum matching; since $|\AA| \le |\HH|$, each agent must be matched to exactly one house. Define the allocation $\phi$ which assigns each agent $a$, its matched house in $\HH$. The weight of any matched edge $\{a, h\}$ denotes the number of envious agents due to assigning $h$ to $a$. Moreover, since each agent can be envious of at most one other agent, the total number of envious agents is simply the sum of the weights of the matching edges: the cost of the matching $M$.

        On the other hand, if $\phi$ is an allocation, then consider a matching $M = \{\{a, \phi(a)\} \mid a \in \AA\}$ defined by this allocation. This is a matching as no two agents are allocated the same house. If an agent $a'$ is envious in $\phi$, it is envious of exactly one of its neighbours, $a \in N_{\GG_\AA}(a')$. This happens when $\PP_{a'} = \{h\}$ and $\phi(a) = h$. Therefore, $a' \in Y_{a,h}$ and hence this contributes exactly $1$ to the weight $w(a, \phi(a))$ while not affecting any other matching edge. Indeed, every such envious agent contributes exactly $1$ to exactly one of the matching edges. This again implies that the total number of envious agents in $\phi$ is same as the cost of the matching $M$.
        
        This completes the proof of Claim~\ref{clm:d-is-1-matching}.
    \end{proof}
    Thus we are done.
\end{proof}
    For completeness, we provide the pseudocode of this polynomial-time algorithm, called Algorithm~\ref{alg:d-is-1}.

    \begin{algorithm}[htbp]
        \caption{ \hfill \textbf{Input:} $\AA, \HH, \GG_\AA(\AA,E_\AA), (\PP_a)_{a \in \AA}$ \hfill \textbf{Output:} Minimum envy}\label{alg:d-is-1}
        \begin{algorithmic}[1]
            \For{$a \in \AA$, $h \in \HH$}
                \State{$w(a, h) \gets |\{a' \in N_{\GG_\AA}(a) \mid \PP_{a'} = \{h\}\}|$}
            \EndFor
            \State{$\GG_\FF \gets$ complete bipartite graph with partitions $\AA$, $\HH$, and weights $w$}
            \State{\Return Minimum cost maximum matching of $\GG_\FF$}
        \end{algorithmic}
    \end{algorithm}

Note that Algorithm~\ref{alg:d-is-1} can be modified to solve \ohaah just by altering the definition of $w(a,h)$ to:

\[ 
    w(a, h) = \begin{cases} 
    |\{a' \in N_{\GG_\AA}(a) \mid \PP_{a'} = \{h\}\}| - \frac{1}{n+1} & \text{if } h \in \PP_a\\
    |\{a' \in N_{\GG_\AA}(a) \mid \PP_{a'} = \{h\}\}| & \text{if } h \notin \PP_a
\end{cases}
\]

A matching in such a bipartite graph will have a cost of $\left(\alpha - \beta \cdot \frac{1}{n + 1}\right)$, where $\alpha, \beta \in [n]$ if it corresponds to an allocation with envy $\alpha$ and happiness $\beta$.

\begin{corollary}
    There is a polynomial-time algorithm for \ohaah,  when every agent likes exactly one house ($d=1$).
\end{corollary}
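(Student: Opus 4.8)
The plan is to reuse the reduction to minimum cost maximum bipartite matching from the proof of Theorem~\ref{thm:d-is-1}, but with the perturbed weight function $w(a,h)$ displayed above, and then argue that a minimum cost maximum matching in this modified $\GG_\FF$ encodes a lexicographically optimal allocation: one that first minimizes envy and, subject to that, maximizes happiness. So the work is not to rebuild the matching machinery but to verify that the small fractional perturbation exactly implements the two-level objective of \ohaah.

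First I would establish the correspondence between matchings and allocations at the level of the new weights, paralleling Claim~\ref{clm:d-is-1-matching}. Since $|\AA| \le |\HH|$, every maximum matching $M$ saturates $\AA$ and induces an allocation $\phi$, and conversely. The key observation is that the two contributions to $w(a,h)$ sit on the same edge $\{a,h\}$ but measure different things: the integer part $|\{a' \in N_{\GG_\AA}(a) \mid \PP_{a'} = \{h\}\}|$ counts, exactly as before, the neighbours of $a$ that become envious because $a$ receives $h$, while the $-\tfrac{1}{n+1}$ correction fires precisely when $a$ itself is happy, i.e. when $h \in \PP_a$. Summing over the edges of $M$, the integer parts total the envy $\alpha$ of $\phi$ (by the same no-double-counting argument already given for Theorem~\ref{thm:d-is-1}), and the fractional parts subtract $\tfrac{1}{n+1}$ once for each happy agent. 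Hence the cost of $M$ is exactly $\alpha - \beta\cdot\tfrac{1}{n+1}$, where $\alpha$ and $\beta$ are the envy and happiness of $\phi$.

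The main step is then to show that minimizing this cost realizes the desired lexicographic objective. Since there are $n$ agents, $\beta \in \{0,1,\dots,n\}$, so $0 \le \beta\cdot\tfrac{1}{n+1} < 1$: the happiness correction is always strictly smaller than one unit of envy. Consequently, if $\alpha_1 < \alpha_2$ are two achievable envy values (so $\alpha_2 \ge \alpha_1 + 1$), then every allocation of envy $\alpha_1$ has cost at most $\alpha_1$, whereas every allocation of envy $\alpha_2$ has cost at least $\alpha_2 - \tfrac{n}{n+1} \ge \alpha_1 + \tfrac{1}{n+1} > \alpha_1$; that is, reducing envy by even one always beats any possible happiness gain. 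A minimum cost matching must therefore attain the minimum possible envy $\alpha^\star$, and among the matchings that do, the cost equals $\alpha^\star - \beta\cdot\tfrac{1}{n+1}$, which is minimized exactly when $\beta$ is maximized. Thus the matching returned is a minimum-envy allocation of maximum happiness.

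Finally, computing a minimum cost maximum matching in $\GG_\FF$ runs in polynomial time, the weights $w(a,h)$ are computable in polynomial time, and decoding the matching into $\phi$ is immediate. I expect the only delicate point to be the lexicographic argument above, namely choosing the perturbation small enough that it can never outweigh a unit change in envy (any value strictly below $\tfrac{1}{n}$ works) while still separating consecutive happiness values; the choice $\tfrac{1}{n+1}$ does both. Everything else is a direct re-run of the proof of Theorem~\ref{thm:d-is-1}.
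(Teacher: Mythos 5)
Your proposal is correct and follows exactly the paper's approach: the paper obtains this corollary by modifying the weight function of Algorithm~\ref{alg:d-is-1} to $w(a,h) - \frac{1}{n+1}$ on edges with $h \in \PP_a$, so that a matching corresponding to an allocation with envy $\alpha$ and happiness $\beta$ has cost $\alpha - \beta\cdot\frac{1}{n+1}$, and then runs minimum cost maximum bipartite matching. Your write-up in fact supplies the lexicographic separation argument (that the total perturbation stays below one unit of envy) which the paper leaves implicit, but this is a fleshing-out of the same proof, not a different route.
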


Now we study \ohaa when the input graph is restricted to certain simple graph classes. For instance, in the case of independent sets, \ohaa is trivially solvable as all allocations are zero-envy, and \ohaah is simply equivalent to the maximum cardinality bipartite matching problem. Moreover, restricting to some arbitrary graph class might potentially increase or decrease the hardness of the problem relative to when the input graph is a complete graph. Hence it is interesting to explore the complexity of \ohaa when the input graph is restricted to certain simpler graph classes.

Madathil et. al.~\cite{madathil2024cost} show that \ohaa for cliques is hard even when each agent prefers at most two houses, i.e. $d=2$. We show that even for the class of bipartite graphs, \ohaa remains hard even when each agent prefers at most two houses, i.e. $d = 2$. This involves a reduction from the \NPH problem of \CLIQUE, which is a nontrivial extension of the work by Madathil et. al.~\cite{madathil2024cost}. 

\shortversion{
\begin{theorem}[$\star$]\label{thm:bip-d-2}
    Given an agent graph $\GG_\AA = (\AA, E_\AA)$ and a set $\HH$ of houses, \ohaa is \NP-hard, even when $\GG_\AA$ is a complete bipartite graph where each agent prefers at most two houses.
\end{theorem}

\begin{proof}[Proof Sketch]
    We provide a reduction from the \NP-hard problem of \CLIQUE~on regular graphs to \ohaa. Let $(G(V,E), k)$ be an instance of \CLIQUE~on regular graphs where $G$ is $\delta$-regular and the problem asks to decide if $G$ has a clique of size $k$. Let $|V| = N$ and $|E| = M$. We reduce $(G(V,E), k)$ to an instance of \ohaa as follows. 

    \begin{itemize}
        \item Let $\AA = \{a^j_v \mid v \in V, j \in [\delta]\} \cup \{a_e \mid e \in E\}$. That is, we have an agent $a_e$ for every edge $e$ of $G$, and $\delta$ agents $a^1_v, a^2_v, \ldots, a^\delta_v$ for each vertex $v$ of $G$. Therefore, $n = |\AA| = \delta N + M$. 
        \item Let $\HH = \{h_v \mid v \in V\} \cup \{h^*_j \mid j \in [\delta N+M-k]\}$. Therefore, $|\HH| = N + \delta N + M - k = |\AA| + (N-k)$. 
        \item We are now going to set the houses $h_v$ to be preferred by agents $a^j_v$ for all $j \in [\delta]$, and also by $a_e$, for all edges $e$ which are incident on $v$. This is effectively done by setting $\PP_{a^j_v} = \{h_v\}$, and $\PP_{a_e} = \{h_u, h_v\}$, where $e = \{u,v\}$. The houses $\{h^*_j \mid j \in [\delta N + M - k]\}$ are preferred by no agents, i.e. these are dummy houses.
        \item We now define the underlying graph $\GG_\AA(\AA, E_\AA)$. Let $\AA_V = \{a^j_v \mid v \in V, j \in [\delta]\}$ and let $\AA_E = \{a_e \mid e \in E\}$. Then $\GG_\AA$ defines the complete bipartite graph between $\AA_V$ and $\AA_E$. That is, $E_\AA = \{\{a,a'\} \mid a \in \AA_V, a' \in \AA_E\}$. 
        \item The target number of envious agents is $k\delta - \binom{k}{2}$. Not that for non-trivial instances, $k \le \delta$, hence $k\delta - \binom{k}{2}$ is a positive integer.
    \end{itemize}
    
    Please refer to Figure~\ref{fig:bip-d-2} in the Appendix for clarity. Every step can be done in polynomial time; hence this is a polynomial-time reduction. The full proof of correctness of the reduction can be found in the Appendix.
\end{proof}}
\begin{theorem}\label{thm:bip-d-2}
    Given an agent graph $\GG_\AA = (\AA, E_\AA)$ and a set $\HH$ of houses, \ohaa is \NP-hard, even when $\GG_\AA$ is a complete bipartite graph where each agent prefers at most two houses.
\end{theorem}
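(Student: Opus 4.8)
The plan is to reduce from \CLIQUE on $\delta$-regular graphs, which is \NPH, using exactly the gadget outlined above: vertex-agents $a^1_v,\dots,a^\delta_v$ (each preferring only $h_v$), edge-agents $a_e$ with $\PP_{a_e}=\{h_u,h_v\}$ for $e=\{u,v\}$, the complete bipartite agent graph between $\AA_V$ and $\AA_E$, and $\delta N + M - k$ dummy houses so that exactly $N-k$ houses remain unassigned and hence at least $k$ vertex houses must be used. The heart of the argument is a clean formula for the envy of ``canonical'' allocations, so I would first isolate that formula and then sandwich the minimum envy between the two directions.

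For completeness, given a clique $S$ with $|S|=k$ I assign $h_v$ to the vertex-agent $a^1_v$ for every $v\in S$, leave the remaining $N-k$ vertex houses unassigned, and give dummy houses to all other agents (there are exactly $\delta N + M - k$ of them, so this is feasible). Under this allocation no vertex-agent is envious: for $v\in S$ the house $h_v$ is held by a (non-adjacent) vertex-agent, and for $v\notin S$ it is unassigned. An edge-agent $a_e$ with $e=\{u,v\}$ is envious precisely when $h_u$ or $h_v$ is held by a neighbouring vertex-agent, i.e. exactly when $e$ is incident to $S$. Thus the number of envious agents equals the number of edges incident to $S$, which is $k\delta - m_S$ where $m_S$ is the number of internal edges of $S$; since $S$ is a clique, $m_S=\binom{k}{2}$ and the envy is exactly $k\delta-\binom{k}{2}$.

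For soundness, I take any allocation with at most $k\delta-\binom{k}{2}$ envious agents and extract a clique. The key structural observation is that assigning a vertex house $h_v$ to an edge-agent immediately makes all $\delta$ vertex-agents of $v$ envious, a cost of $\delta$, whereas re-routing $h_v$ to a vertex-agent (and pushing the displaced dummy to that edge-agent) creates at most $\delta$ newly envious edge-agents, namely only among edges incident to $v$. An exchange argument then shows we may assume, without increasing the envy, that every assigned vertex house goes to a vertex-agent. Writing $A\subseteq V$ for the set of vertices whose house is assigned, feasibility forces $|A|\ge k$, and the same case analysis as in completeness gives envy $=$ (number of edges incident to $A$) $= M - e(V\setminus A)$, where $e(\cdot)$ counts internal edges. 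Since $e(V\setminus A)$ is monotone in $V\setminus A$, this quantity is minimized by taking $|A|=k$, yielding envy $= k\delta - m_A \ge k\delta - \binom{k}{2}$, with equality only when $A$ spans $\binom{k}{2}$ edges, i.e. when $A$ is a $k$-clique.

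The main obstacle I anticipate is the soundness exchange argument: I must verify carefully that redirecting houses away from edge-agents never increases the envy count, tracking which edge-agents newly become envious and ensuring the swaps (in particular, which vertex-agent absorbs $h_v$ and where its displaced house is placed) can be carried out without side effects. Once this ``no house to an edge-agent'' normal form is established, the reduction of minimum envy to ``minimize edges incident to a $k$-set'' $=$ ``maximize edges inside its complement'' is routine, and the monotonicity step pins the optimum at $|A|=k$, cleanly equating the existence of an allocation with envy at most $k\delta-\binom{k}{2}$ with the existence of a $k$-clique in $G$.
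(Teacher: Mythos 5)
Your proposal follows essentially the same route as the paper's proof: the identical reduction from \CLIQUE on $\delta$-regular graphs (same agents, houses, preferences, complete bipartite agent graph between $\AA_V$ and $\AA_E$, and target $k\delta - \binom{k}{2}$), the same forward direction, and the same plan of normalizing an arbitrary low-envy allocation so that vertex houses sit only with vertex-agents before counting edges incident to the set $A$ of assigned vertex houses. Your endgame is a slightly different (and clean) variant: the paper argues that after normalization \emph{exactly} $k$ agents of the form $a^1_v$ hold non-dummy houses and counts incident edges directly, whereas you allow $|A|\ge k$ and use monotonicity of ``edges incident to $A$'' to pass to a $k$-subset; both arguments are sound.

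The one genuine gap is the step you yourself flag as the main obstacle, and as written your exchange rule would fail there. ``Re-route $h_v$ to a vertex-agent and push the displaced dummy to that edge-agent'' silently assumes that the vertex-agent absorbing $h_v$ currently holds a dummy house; in an arbitrary allocation $a^1_v$ may instead hold some other vertex house $h_u$, and handing $h_u$ to the edge-agent $a_e$ would make all $\delta$ vertex-agents of $u$ envious, so the single swap can increase envy. The paper closes exactly this hole with a chain argument: follow the sequence $a^1_{v_0}=a^1_v$ with $\phi(a^1_{v_i})=h_{v_{i+1}}$ until an agent $a^1_{v_p}$ holding a dummy house $h^*_q$ is reached (injectivity of $\phi$ makes the $v_i$ distinct, so the chain is finite), then shift the whole chain at once, giving each $a^1_{v_i}$ its own house and the dummy $h^*_q$ to $a_e$, and only then carries out the $\pm\delta$ accounting you describe, split into the two cases $e\ni v$ and $e\not\ni v$. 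An alternative repair consistent with your sketch is to first exhaust the easy swaps (a vertex house held by the wrong vertex-agent is swapped to its own $a^1_v$, which strictly increases the number of correctly placed vertex houses and never increases envy); after these terminate, every vertex-agent holds its own house or a dummy, so your single swap then applies verbatim. Either way, this verification is the bulk of the proof and must be carried out rather than anticipated; the rest of your argument (feasibility forcing $|A|\ge k$, envy equal to the number of edges incident to $A$, and the clique extraction) is correct.
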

\begin{figure}[ht]
    \begin{subfigure}[t]{0.95\textwidth} 
        \centering 
        \includegraphics[width=0.3\textwidth]{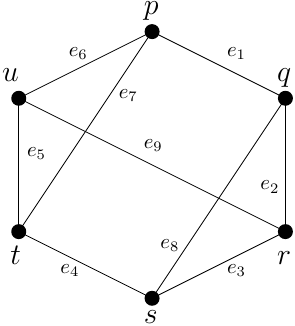}
        \caption{Instance of CLIQUE for $3$-regular graphs, $k = 3$}
    \end{subfigure} \\
    \begin{subfigure}[t]{0.95\textwidth}
        \centering
        \includegraphics[width=0.9\textwidth]{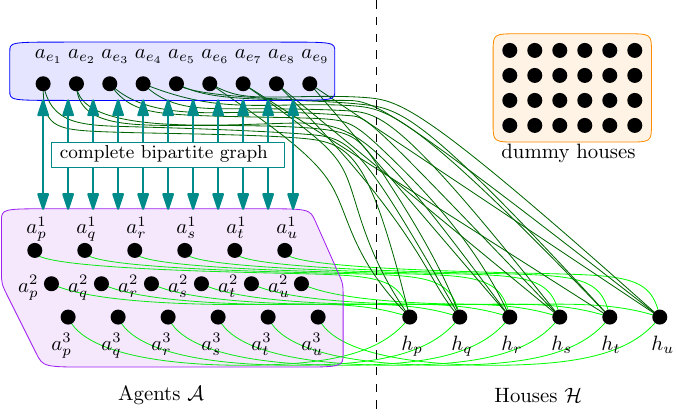}
        \caption{The reduced instance as per Theorem~\ref{thm:bip-d-2}}
    \end{subfigure} \hfill \hfill \hfill
    \caption{Reduction for Theorem~\ref{thm:bip-d-2}} \label{fig:bip-d-2}
\end{figure}

\begin{proof}
    We provide a reduction from the \NP-hard problem of \CLIQUE~on regular graphs to \ohaa. Let $(G(V,E), k)$ be an instance of \CLIQUE~on regular graphs where $G$ is $\delta$-regular and the problem asks to decide if $G$ has a clique of size $k$. Let $|V| = N$ and $|E| = M$. 

    We reduce $(G(V,E), k)$ to an instance of \ohaa as follows. 

    \begin{itemize}
        \item Let $\AA = \{a^j_v \mid v \in V, j \in [\delta]\} \cup \{a_e \mid e \in E\}$. That is, we have an agent $a_e$ for every edge $e$ of $G$, and $\delta$ agents $a^1_v, a^2_v, \ldots, a^\delta_v$ for each vertex $v$ of $G$. Therefore, $n = |\AA| = \delta N + M$. 
        \item Let $\HH = \{h_v \mid v \in V\} \cup \{h^*_j \mid j \in [\delta N+M-k]\}$. Therefore, $|\HH| = N + \delta N + M - k = |\AA| + (N-k)$. 
        \item We are now going to set the houses $h_v$ to be preferred by agents $a^j_v$ for all $j \in [\delta]$, and also by $a_e$, for all edges $e$ which are incident on $v$. This is effectively done by setting $\PP_{a^j_v} = \{h_v\}$, and $\PP_{a_e} = \{h_u, h_v\}$, where $e = \{u,v\}$. The houses $\{h^*_j \mid j \in [\delta N + M - k]\}$ are preferred by no agents, i.e., these are dummy houses.
        \item We now define the underlying graph $\GG_\AA(\AA, E_\AA)$. Let $\AA_V = \{a^j_v \mid v \in V, j \in [\delta]\}$ and let $\AA_E = \{a_e \mid e \in E\}$. Then $\GG_\AA$ defines the complete bipartite graph between $\AA_V$ and $\AA_E$. That is, $E_\AA = \{\{a,a'\} \mid a \in \AA_V, a' \in \AA_E\}$. 
        \item The target number of envious agents is $k\delta - \binom{k}{2}$. Not that for non-trivial instances, $k \le \delta$, hence $k\delta - \binom{k}{2}$ is a positive integer.
    \end{itemize}
    
    Please refer to Figure~\ref{fig:bip-d-2} for clarity. Every step can be done in polynomial time; hence this is a polynomial-time reduction. We now show the correctness of the reduction. 

    \textbf{Forward direction.}
    Let $S \subseteq V$ be a clique in $G$ of size $|S| = k$. Consider the allocation $\phi$ defined as follows:
    \begin{itemize}
        \item Assign $h_v$ to $a^1_v$ for all $v \in S$. 
        \item For the remaining $|\AA| - k = \delta N+M-k$ agents, assign them a dummy house each. 
    \end{itemize}
    
    In $\phi$ the houses $h_v$ are left unassigned for $v \notin S$. No agent in $\AA_V$ are envious as all their neighbours (i.e. $\AA_E$) are assigned dummy houses. Moreover, $a_e$ is not envious if $e$ is not incident on $S$; this is because the houses preferred by $a_e$ are unallocated. Therefore, the number of envious agents is at most the number of edges incident on $S$; this is precisely $k\delta - \binom{k}{2}$, as $G$ is $\delta$-regular and $S$ is a clique of size $k$.

    \textbf{Reverse direction.}
    Let $\phi$ be an allocation with at most $k\delta - \binom{k}{2}$ envious agents. We say that an allocation $\phi$ is nice if for all $v \in V$, the house $h_v$ is either unassigned, or assigned to the agent $a^1_v$. If $\phi$ is not nice to begin with, then $\phi$ can be converted to a nice allocation by a sequence of exchanges, none of which increase total envy.

    Let $h_v$ be a house that is assigned to an agent in $\AA$ other than $a^1_v$.

    \begin{itemize}
        \item \textbf{Case I: $h_v$ is assigned to $a^j_v$ for some $j \ne 1$.} We swap the houses allocated to $a^j_v$ and $a^1_v$. This does not change the number of envious agents, due to the symmetry of $\GG_\AA$.
        \item \textbf{Case II: $h_v$ is assigned to $a^j_u$ for some $j \in [\delta]$, $u \ne v$.} We swap the houses allocated to $a^j_u$ and $a^1_v$. Again, this does not increase the number of envious agents. This is because
        \begin{enumerate}
            \item $a^1_v$ is not envious in both allocations.
            \item $a^j_u$ did not have their preferred house before the swap. Since the neighbourhood of $a^j_u$ is unaffected by the swap, $a^j_u$ cannot become envious after the swap if $a^j_u$ was not envious before the swap.
            \item For all other agents, they are envious after the swap if and only if they were envious before the swap. This is because the set of houses allocated to their neighbours remain unchanged.
        \end{enumerate}
        \item \textbf{Case III: $h_v$ is assigned to $a_e$ for some $e \in E$.} $a^1_v, a^2_v \ldots, a^\delta_v$ must be envious of $a_e$ in $\phi$. We now look at the sequence of agents $a^{1}_{v_0} = a^1_v, a^{1}_{v_1}, \ldots, a^{1}_{v_{p-1}}, a^{1}_{v_p}$ such that 
        \begin{itemize}
            \item $\phi(a_e) = h_{v_0}$.
            \item $\phi(a^{1}_{v_0}) = h_{v_1}, \phi(a^{1}_{v_1}) = h_{v_2}, \ldots, \phi(a^{1}_{v_{p-1}}) = h_{v_p}$.
            \item $\phi(a^{1}_{v_p}) = h^*_q$ is a dummy house.
        \end{itemize}

        Note that such a sequence must exist and end with an agent who is assigned a dummy house, as the number of agents and houses are finite. Therefore, $p$ is a finite integer. Now, we modify the allocation to $\phi'$ as follows:

        \begin{itemize}
            \item $\phi'(a_e) = h^*_q$.
            \item $\phi'(a^{1}_{v_0}) = h_{v_0}, \phi'(a^{1}_{v_1}) = h_{v_1}, \ldots, \phi'(a^{1}_{v_{p-1}}) = h_{v_{p-1}}, \phi'(a^{1}_{v_p}) = h_{v_p}$.
            \item $\phi'(a) = \phi(a)$ for all other agents $a$.
        \end{itemize}

        We will now argue why this set of exchanges do not increase the number of envious agents.
        \begin{itemize}
            \item \textsl{Case A: $e$ is incident on $v$.} Let $e = e_1, e_2, \ldots, e_\delta$ be the edges incident on $v$. In $\phi$, $a_e$ was not envious, but agents $a_{e_2}, \ldots, a_{e_\delta}$ may or may not be envious. The agents $a^1_v, a^2_v, \ldots, a^\delta_v$ were all envious in $\phi$. In $\phi'$, $a^1_v, a^2_v, \ldots, a^\delta_v$ are not envious, agent $a_e$ becomes envious, and agents $a_{e_2}, \ldots, a_{e_\delta}$ may become envious. Therefore, the number of envious agents in $\{a_e, a_{e_2}, a_{e_3}, \ldots, a_{e_\delta}\} \cup \{a^1_v, a^2_v, \ldots, a^\delta_v\}$ do not increase. The agents $a^1_{v_1}, a^1_{v_2}, \ldots, a^1_{v_p}$ all become non-envious in $\phi'$ as they get their preferred houses. For all other agents $a^j_u$ where $u$ is a vertex in $V$, the dummy house $h^*_q$ is the only house which appears in the set of houses assigned to neighbours of $a^j_u$ in $\phi'$ which did not appear in $\phi$; hence $a^j_u$ is envious in $\phi'$  only if they were envious in $\phi$. Finally, all other agents $a_{e'}$ where $e'$ is an edge in $E$, are envious in $\phi'$ if and only if they were envious in $\phi$ as only $h_v$ (a house not preferred by $a_{e'}$) gets added to the set of houses assigned to their neighbours. 
            \item \textsl{Case B: $e$ is not incident on $v$.} Let $e_1, e_2, \ldots, e_\delta$ be the edges incident on $v$. In $\phi$, the agents $a_{e_1}, a_{e_2}, \ldots, a_{e_\delta}$ may or may not be envious; but the agents $a^1_v, a^2_v, \ldots, a^\delta_v$ were all envious in $\phi$. In $\phi'$, $a^1_v, a^2_v, \ldots, a^\delta_v$ are not envious, while agents $a_{e_1}, a_{e_2}, \ldots, a_{e_\delta}$ may become envious. Therefore, the number of envious agents in $\{a_{e_1}, a_{e_2}, a_{e_3}, \ldots, a_{e_\delta}\} \cup \{a^1_v, a^2_v, \ldots, a^\delta_v\}$ do not increase. The agents $a^1_{v_1}, a^1_{v_2}, \ldots, a^1_{v_p}$ all become non-envious in $\phi'$ as they get their preferred houses. If the agent $a_e$ is not envious in $\phi$, then it cannot become envious in $\phi'$ as only $h_{v}$ (a house not preferred by $a_e$) gets added to the set of houses assigned to its neighbours. For all other agents $a^j_u$ where $u$ is a vertex in $V$, the dummy house $h^*_q$ is the only house which appears in the set of houses assigned to neighbours of $a^j_u$ in $\phi'$ which did not appear in $\phi$; hence $a^j_u$ is envious in $\phi'$ only if they were envious in $\phi$. Finally, all other agents $a_{e'}$ where $e'$ is an edge in $E$, are envious in $\phi'$ if and only if they were envious in $\phi$ as only $h_v$ (a house not preferred by $a_{e'}$) gets added to the set of houses assigned to their neighbours. 
        \end{itemize}
    \end{itemize}

    These exchanges do not increase the total envy, but they allocate at least one house $h_v$ to $a^1_v$, which was previously allocated to some other agent. Repeating this for at most $|\HH|$ times would give us a nice allocation, with envy no more than that in $\phi$.

    Hence, we can safely assume that $\phi$ is a nice allocation with envy at most $k\delta - \binom{k}{2}$. $\phi$ allocates dummy houses to $a_e$ for all $e \in E$ and to $a^2_v, a^3_v, \ldots, a^\delta_v$ for all $v \in V$. Moreover, it is safe to assume that all dummy houses are assigned to some agent, otherwise we can assign a dummy house to any $a^1_v$ without increasing total envy.

    We now look into agents which are not assigned dummy vertices by $\phi$. There are $|\AA| - (\delta N + M - k) = (\delta N + M) - (\delta N + M -k) = k$ many such agents, all of the form $a^1_v$, $v \in V$. Let $a^{1}_{v_1}, a^{1}_{v_2}, \ldots, a^{1}_{v_k}$ be the agents which are not assigned dummy houses. For any $e \in E$, $e$ is incident on least one vertex in $v_1, v_2, \ldots, v_k$, if and only if the agent $a_e$ is envious. Therefore, the number of envious agents is precisely the number of edges which are incident on the vertices $v_1, v_2, \ldots, v_k$. Let $\gamma$ be the number of edges with both endpoints in $\{v_1, v_2, \ldots, v_k\}$. The number of edges incident on $\{v_1, v_2, \ldots, v_k\}$ is therefore equal to $k\delta - \gamma$. By assumption, the number of envious agents is at most $k\delta - \binom{k}{2}$. Thus $\gamma \ge \binom{k}{2}$, implying $v_1, v_2, \ldots, v_k$ is a clique.

    This completes the proof of correctness of the reduction.
\end{proof}

In Section~\ref{sec:vc}, we further prove that restricting the agent graphs to bipartite graphs (Theorem~\ref{thm:vc-bip}) and split graphs (Theorem~\ref{thm:vc-split}) with very small vertex cover sizes still yields \NP-hardness.

Moreover, Madathil et. al~\cite{madathil2024cost} provide polynomial-time algorithms for the following cases: (i) when $|\AA| = |\HH|$, (ii) when preferences $(\PP_a)_{a \in \AA}$ follow an extremal structure. Preferences follow an extremal structure if there is an ordering of houses in $\HH = \{h_1, h_2, \ldots, h_m\}$ such that for every agent $a \in \AA$, they prefer either some suffix of the houses, $\PP_a = \{h_i, h_{i+1}, \ldots, h_m\}$, or some prefix of the houses $\PP_a = \{h_1, h_2, \ldots, h_i\}$, for some $i \in [m]$. 

In contrast, we show that even when we restrict the graph of an \ohaa instance to be 3-regular, and have $|\AA| = |\HH|$ and identical preferences for agents, the problem is \NPH. Note that identical preferences, i.e. $\PP_a = \PP_{a'}$ for all $a, a' \in \AA$, is a special case of extremal structure as we can order the houses where all dummy houses are ordered after all houses preferred by every agent. 

\shortversion{
\begin{theorem}[$\star$]\label{thm:3reg-hard}
    Given an agent graph $\GG_\AA = (\AA, \EE_\AA)$ and a set $\HH$ of houses, \ohaa is \NP-hard even when the input graph is a 3-regular graph with $|\AA| = |\HH|$ and agents have identical preferences (i.e. $\PP_a = \PP_{a'}$ for all $a, a' \in \AA$).
\end{theorem}}

\longversion{
\begin{theorem}\label{thm:3reg-hard}
    Given an agent graph $\GG_\AA = (\AA, \EE_\AA)$ and a set $\HH$ of houses, \ohaa is \NP-hard even when the input graph is a 3-regular graph with $|\AA| = |\HH|$ and agents have identical preferences (i.e. $\PP_a = \PP_{a'}$ for all $a, a' \in \AA$).
\end{theorem}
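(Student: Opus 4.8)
The plan is to first collapse the problem, under these restrictions, into a purely combinatorial question about the agent graph, and then to prove that question \NPH by reducing from \textsc{Minimum Bisection}.

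\emph{Reformulation.} Since $|\AA| = |\HH| = n$, every allocation $\phi$ is a bijection, so all houses---in particular all preferred houses---are assigned. Under identical preferences, write $p = |\PP|$ for the number of preferred (``good'') houses; then exactly $p$ agents are happy, and they form a set $S \subseteq \AA$ with $|S| = p$, while the remaining $n-p$ agents receive dummy houses. An agent is envious precisely when it is unhappy but has a happy neighbour, so the set of envious agents is exactly $N_{\GG_\AA}(S) \setminus S$. Hence, on a $3$-regular graph, \ohaa is equivalent to the following: \emph{choose $S$ with $|S| = p$ minimizing the outer vertex boundary $|N_{\GG_\AA}(S)\setminus S|$.} I would prove this ``minimum vertex boundary of a fixed-size set'' problem \NPH on cubic graphs, which immediately yields the theorem (and, as noted in the excerpt, carries over to \ohaah and to extremal preferences).

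\emph{Reduction.} I would reduce from \textsc{Minimum Bisection} on $3$-regular graphs, which is \NPH: given a cubic graph $G$ and a bound $c$, decide whether $V(G)$ can be split into two equal halves with at most $c$ edges across. I build $H$ by replacing each vertex $v$ of $G$ with a connected cubic gadget (a ``blob'') $B_v$ of a fixed odd size $L$ having three degree-two ports, and realizing each edge $\{u,v\}$ of $G$ as a single edge joining a free port of $B_u$ to a free port of $B_v$; after attachment every port has degree three, so $H$ is $3$-regular with $n = |V(G)|\cdot L$ vertices. I set identical preferences with $p = (|V(G)|/2)\cdot L = n/2$ good houses and $n-p$ dummies; here $|V(G)|$ is even because $G$ is cubic, so $p$ is an integer and a multiple of $L$. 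The claim is that the minimum envy of $H$ equals the minimum bisection width of $G$.

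\emph{Correctness and the main obstacle.} For the forward direction, a bisection $(A,B)$ yields $S = \bigcup_{v \in A} B_v$ with $|S| = p$; the only vertices of $\bar S$ adjacent to $S$ are the $B$-side ports of the cut edges, and distinct cut edges use distinct ports, so the envy equals the number of cut edges. The converse is the delicate part: I must show that \emph{every} $S$ with $|S| = p$ can be transformed, without increasing $|N_{\GG_\AA}(S)\setminus S|$, into a \emph{blob-respecting} set (a union of whole blobs), which then corresponds to a genuine bisection of $G$ of width equal to the envy. This normalization---analogous to the ``nice allocation'' argument of Theorem~\ref{thm:bip-d-2}---is the main obstacle. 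The key levers are that $p$ is a multiple of $L$, so if any blob is partially occupied then at least two are (guaranteeing a balanced exchange exists), and that any partially occupied connected blob contributes internal boundary; choosing $L$ large and the blob to have a convex vertex-isoperimetric profile should let me repeatedly ``defragment'' partial blobs by boundary-non-increasing exchanges until the set is integral. Once normalization is in place, the two directions give minimum envy $=$ minimum bisection width, establishing \NPH under all the stated restrictions.
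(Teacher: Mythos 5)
Your opening reformulation is correct and is in fact exactly the first step of the paper's own argument: with $|\AA|=|\HH|$ every allocation is a bijection, all $p$ preferred houses are assigned, the happy agents form a set $S$ of size $p$, and the envious agents are precisely $N_{\GG_\AA}(S)\setminus S$; so the restricted \ohaa instance asks to minimize the outer vertex boundary of a set of prescribed size. Where you diverge is in how to prove that combinatorial problem \NPH on cubic graphs. The paper does not need any gadgets: it observes that this problem \emph{is} (up to parity bookkeeping) the $1/2$-Vertex Separator problem --- the envious agents are a separator $J$ cutting the happy set $X$ off from the remaining non-envious unhappy agents $Y$, and conversely any separator yields an allocation whose only envious agents lie in the separator --- and that problem is already known to be \NP-complete on $3$-regular graphs (M\"uller and Wagner). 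The paper's target value $2\lfloor k/2\rfloor$ and its two claims exist solely to handle parity and to pad the separator to exact size. Your route instead reduces from \textsc{Minimum Bisection} on cubic graphs via blob gadgets, which forces you to prove a normalization lemma that the paper's choice of starting problem lets it avoid entirely.

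That normalization lemma is a genuine gap, and it is the entire mathematical content of your reduction: you state it as ``the main obstacle'' and offer only the heuristic that a blob with a ``convex vertex-isoperimetric profile'' and large $L$ ``should'' allow boundary-non-increasing defragmentation. Concretely, here is why this is not routine. Your blobs must themselves be cubic (with three degree-two ports), so their vertex connectivity is at most $3$; a partially occupied blob can have internal vertex boundary as small as $1$ or $2$ (for instance, a blob missing a single non-port vertex contributes exactly one boundary vertex). That is the \emph{same order} as the contribution of a cut edge of $G$ to the boundary (exactly $1$ per cut edge, since each port carries one external edge), so the ``error'' from partial blobs is not dominated by the ``signal,'' and your claimed exact equality between minimum envy and minimum bisection width leaves no slack to absorb it. Moreover, the exchange you propose is not obviously boundary-non-increasing: filling a blob to completion occupies its ports, which creates up to three new boundary vertices at the foreign endpoints of its external edges, while the internal boundary you delete may be a single vertex; emptying the donor blob similarly reshapes boundary at its ports. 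Fixing this would require either exhibiting a concrete cubic gadget together with a proven vertex-isoperimetric inequality strong enough to beat these $O(1)$ losses, or amplifying each edge of $G$ into a high-weight connector so that cut contributions swamp the per-blob errors (which changes the target value and the accounting). As written, the proposal establishes only the easy direction (envy $\le$ bisection width), so the theorem does not follow from it.
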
 
\begin{proof}
    We provide a reduction from the $1/2$-Vertex Separator problem~\cite{muller1991alpha} on 3-regular graphs to  \ohaa. The problem takes input a graph $G(V,E)$ and an integer $k$ and asks if there is a subset $S$ of $V$ that $|S| \le k$ such that $V \setminus S$ can be partitioned into two equal-sized subsets $S_1$, $S_2$ such that no edge connects a vertex in $S_1$ with a vertex in $S_2$. This problem is shown to be \NP-complete even for 3-regular graphs by M{\"u}ller and Wagner~\cite{muller1991alpha}.

    Let $(G(V,E),k)$ be an instance of $1/2$-Vertex Separator problem such that $G(V,E)$ is 3-regular. Let $V = \{a_1, a_2, \ldots, a_n\}$. We create an instance $(\AA, \HH, \GG_\AA(\AA, E_\AA), (\PP_a)_{a \in \AA}, 2\left\lfloor \frac k 2 \right\rfloor)$ of \ohaa as follows:

    \begin{itemize}
        \item $\AA = V = \{a_1, a_2, \ldots, a_n\}$.
        \item $\HH = \{h_1, h_2, \ldots, h_n\}$.
        \item $E_\AA = E$.
        \item Let $\HH^* = \{h_1, h_2, \ldots, h_t\}$, where $t = \frac{n}{2} - \left\lfloor \frac{k}{2} \right\rfloor$. $\PP_a = \HH^*$ for all $a \in \AA$. 
    \end{itemize}

    Intuitively the underlying agent graph is the same as $G$, and every agent prefers the same $t = \frac{n}{2} - \left\lfloor \frac{k}{2} \right\rfloor$ houses. Note that $t$ is an integer since the number of vertices of $G$, $n$, is even as $G$ is 3-regular. This reduction can be computed in polynomial time. We now show the correctness of the reduction. 
    
    \begin{claim}\label{clm:half-vs-equiv}
        $(G, k)$ and $(G, 2 \left\lfloor \frac{k}{2} \right\rfloor)$ are equivalent instances of the $1/2$-Vertex Separator problem, if $G$ is 3-regular.
    \end{claim}
    \begin{proof}
        This holds trivially when $k$ is even. Further, if $(G, 2 \left\lfloor \frac k 2 \right\rfloor)$ is a Yes-instance, then so is $(G, k)$ as $k \ge 2 \left\lfloor \frac k 2 \right\rfloor$.
        
        Now, say $k$ is odd. and let $(G, k)$ be a Yes-instance. Let $k = 2\lambda + 1$ for $\lambda \in \mathbb{Z}$. Note that since $G$ is 3-regular, $n = |V|$ must be even. Let $S$ be any separator that separates $V\setminus S$ into equal sized, disjoint, and non-adjacent parts $S_1$, $S_2$ and satisfying $|S| \le k$. Therefore, $|S_1| = |S_2|$. Hence $|V| - |S| = |S_1| + |S_2|$, or $|S| = n - 2|S_1|$, an even integer: say $|S| = 2 \beta$, for some integer $\beta$. Therefore, $2\beta \le 2\lambda + 1 \implies \beta \le \lambda + 0.5$. Since $\beta, \lambda$ are integers, we must have $\beta \le \lambda$; implying $|S| = 2\beta \le 2\lambda = 2 \left\lfloor \frac k 2 \right\rfloor$. Hence $(G, 2 \left\lfloor \frac k 2 \right\rfloor)$ is also a Yes-instance.  
    \end{proof} 

    \begin{claim}\label{clm:half-exact}
        If $(G(V,E), 2 \left\lfloor \frac{k}{2} \right\rfloor)$ is a Yes-instance of the $1/2$-Vertex Separator problem, where $G$ is 3-regular, there exists $S \subseteq V$ of size $|S| = 2 \left\lfloor \frac k 2 \right\rfloor$ that separates $V \setminus S$ into equal sized, disjoint, and non-adjacent parts.
    \end{claim}
    \begin{proof} 
        Let $S'$ be any separator that separates $V \setminus S$ into two equal sized, disjoint, and non-adjacent parts $S_1', S_2'$, such that $|S'| \le 2 \left\lfloor \frac k 2 \right\rfloor$. Such an $S'$ exists as $(G(V,E), 2 \left\lfloor \frac k 2 \right\rfloor)$ is a Yes-instance. 

        Let $S_1' = \{x_1, x_2, \ldots, x_s\}$ and $S_2' = \{y_1, y_2, \ldots, y_s\}$ for some $s$, where $x_i$ is not adjacent to $y_j$ for all $i,j \in [s]$. Therefore, $|S'| = n - 2s$, an even number. This gives us

        $$n - 2s \le 2 \left\lfloor \frac k 2 \right\rfloor \implies s - \frac n 2 + \left\lfloor \frac k 2 \right\rfloor \ge 0 $$

        Define $\gamma = s - \frac n 2 + \left\lfloor \frac k 2 \right\rfloor$, note that $0 \le \gamma \le s$. We now construct our target separator $S$.

        \begin{itemize}
            \item define $S_1 = \{x_{\gamma + 1}, x_{\gamma + 2}, \ldots, x_s\}$
            \item define $S_2 = \{y_{\gamma + 1}, y_{\gamma + 2}, \ldots, y_s\}$
            \item define $S = S' \cup \{x_1, x_2, \ldots, x_{\gamma}\} \cup \{y_1, y_2, \ldots, y_\gamma\}$
        \end{itemize}
         
        We have $S \cup S_1 \cup S_2 = V$, and $S$ separators $V \setminus S$ into equal sized, disjoint, and non-adjacent subsets $S_1$ and $S_2$. Moreover, 
        
        $$|S| = |S'| + 2\gamma = (n - 2s) + 2 \left(s - \frac n 2 + \left\lfloor \frac k 2 \right\rfloor\right) = 2 \left\lfloor \frac k 2 \right\rfloor$$

        This completes the proof of Claim~\ref{clm:half-exact}.
    \end{proof}

    \textbf{Reverse direction.}
    We first show if the \ohaa instance $(\AA, \HH, \GG_\AA(\AA, E_\AA), (\PP_a)_{a \in \AA}, 2 \left\lfloor \frac k 2 \right\rfloor)$ is a Yes-instance, then so is the $1/2$-Vertex Separator instance of $(G,k)$. Let $\phi$ be an allocation with at most $2\left\lfloor \frac k 2 \right\rfloor$ envious agents. Let $X \subseteq \AA$ defined by $X = \{a \in \AA \mid \phi(a) \in \HH^*\}$. Therefore, $|X| = t = \frac{n}{2} - \left\lfloor \frac k 2 \right\rfloor$. Let $J \subseteq \AA$ be the set of envious agents. By assumption, $|J| \le 2 \left\lfloor \frac k 2 \right\rfloor$. Let $Y = \AA \setminus (X \cup J)$, be the set of non-envious agents outside $X$.
    
    Observe that no edge in $E_\AA$ can connect an agent in $X$ to an agent in $Y$; otherwise the corresponding agent in $Y$ would be envious of the corresponding agent in $X$. Note that $|Y \cup J| = \frac n 2 + \left\lfloor \frac k 2 \right\rfloor \ge 2 \left\lfloor \frac k 2 \right\rfloor$. Let $J'$ be an arbitrary subset of $Y \cup J$ and a superset of $J$ satisfying $|J'| = 2 \left\lfloor \frac k 2 \right\rfloor$. Then $J'$ is separates $V = \AA$ into $X$ and $(Y \setminus J')$ satisfying $|X| = |Y \setminus J'| = \frac n 2 - \left\lfloor \frac k 2 \right\rfloor$. Therefore, $(G, 2\left\lfloor \frac k 2 \right\rfloor)$ is a Yes-instance of the $1/2$-Vertex Separator problem, and so is $(G,k)$ (by Claim~\ref{clm:half-vs-equiv}).
    
    \textbf{Forward direction.}    
    On the other hand, if $(G, k)$ is a Yes-instance for the $1/2$-Vertex Separator problem, then so is $(G, 2 \left\lfloor \frac k 2 \right\rfloor)$ (by Claim~\ref{clm:half-vs-equiv}). By Claim~\ref{clm:half-exact}, there exists a partition $S, X, Y$ of $\AA$ such that $|S| = 2 \left\lfloor \frac k 2 \right\rfloor$, $|X| = |Y| = \frac n 2 - \left\lfloor \frac k 2 \right\rfloor$, and there is no edge between $X$ and $Y$ in $\GG_\AA$. We create an allotment $\phi$ defined as follows:
    \begin{itemize}
        \item Assign all houses in $\HH^*$ to agents in $X$ (arbitrarily).
        \item Assign all other houses to all other agents (arbitrarily).
    \end{itemize}
    The only envious agents can be the ones in $S$. Hence there are at most $2 \left\lfloor \frac k 2 \right\rfloor$ many envious agents. Therefore, the \ohaa instance $(\AA, \HH, \GG_\AA(\AA, E_\AA), (\PP_a)_{a \in \AA}, 2 \left\lfloor \frac k 2 \right\rfloor)$ is a Yes-instance.

    This completes the correctness of the reduction.
\end{proof}
}

With this, we derive that even when restricting to certain simple graph classes the problem of \ohaa might be harder than when restricting to just complete graphs.





\section{Exact Algorithms}\label{sec:exact}

In this section, we design exact algorithms for \ohaa. The naive brute force algorithm that enumerates all possible assignments of $m$ houses to $n$ agents takes time $2^{\OO(n \log m)}$. We design exact algorithms that are more efficient than this. Each algorithm caters to different graph properties of the input agent graphs and therefore become most efficient in different scenarios depending on the input agent graph. All the algorithms extend easily to work for \ohaah as well. 
First, we design an algorithm that is single exponential in the number of edges in the agent graph. This algorithm is more efficient that the above naive algorithm when the agent graph is a sparse graph. We also design an alternate separator-based algorithm that has better running time for graph classes with small balanced separators, like the class of planar graphs and bounded genus-graphs.

{\color{purple}



}
\subsection{Single Exponential Exact Algorithms}

First, we propose an exact algorithm with running time being single exponential in the number of nodes and edges.

\begin{theorem}\label{thm:exp-edge}
    There exists an algorithm that solves \ohaa for the instance $(\AA, \HH, \GG_\AA(\AA, E_\AA), (\PP_a)_{a \in \AA})$ in time $2^{n + 2|E_\AA|} \cdot (n + m)^{\OO(1)}$, where $n = |\AA|$ and $m = |\HH|$.
\end{theorem}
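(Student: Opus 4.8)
The plan is to reduce the optimization, for each of a controlled number of guesses, to a single bipartite matching computation. I would branch over two pieces of combinatorial data that together determine everything relevant to envy. First, for each agent $a \in \AA$ I guess one bit recording whether $a$ is happy, i.e. whether $\phi(a) \in \PP_a$; this contributes $n$ bits. Second, for each ordered pair $(a,a')$ with $a' \in N_{\GG_\AA}(a)$ I guess a bit $x_{a,a'}$ recording whether the neighbour $a'$ is to receive a house preferred by $a$, i.e. whether $\phi(a') \in \PP_a$; since each undirected edge gives two ordered pairs, this contributes $2|E_\AA|$ bits. Hence the total number of guesses is $2^{n+2|E_\AA|}$, which is exactly the exponential factor in the claimed bound.

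The point is that once these bits are fixed, the envy of the (still unknown) allocation is completely determined: agent $a$ is envious precisely when $a$ is unhappy and $x_{a,a'}=1$ for at least one neighbour $a'$, so the envy count, and likewise the happiness count, can be read directly off the guess. What remains is to decide whether a guess is realizable by a genuine injective allocation. Here the crucial structural fact is that each bit constrains a single value of $\phi$: the happiness bit of $a$ forces $\phi(a)$ into $\PP_a$ or into $\HH \setminus \PP_a$, while each bit $x_{a,a'}$ forces $\phi(a')$ into $\PP_a$ or $\HH \setminus \PP_a$. Collecting all constraints that bear on a fixed agent $b$ (its own happiness bit together with $x_{a,b}$ over all neighbours $a$) gives an explicit admissible house set obtained by intersecting, with the set dictated by $b$'s happiness bit, the set
\[
\bigcap_{\substack{a \in N_{\GG_\AA}(b) \\ x_{a,b}=1}} \PP_a \ \cap \bigcap_{\substack{a \in N_{\GG_\AA}(b) \\ x_{a,b}=0}} (\HH \setminus \PP_a),
\]
call the result $H_b$, computable in polynomial time. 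The guess is realizable if and only if the bipartite graph joining each $b \in \AA$ to every house in $H_b$ has a matching saturating $\AA$ (unassigned houses are allowed since $m$ may exceed $n$), which is checkable in polynomial time.

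The algorithm then returns the minimum, over all realizable guesses, of the recorded envy count; for \ohaah it additionally breaks ties by maximizing the recorded number of happy agents among guesses attaining this minimum. Correctness rests on a two-way correspondence. Any actual allocation $\phi$ induces a realizable guess by taking each bit to be the truth value of its defining condition under $\phi$, and by construction its recorded envy and happiness equal the true values for $\phi$. Conversely, a realizable guess is witnessed by a saturating matching read as an allocation whose actual envy and happiness coincide with the recorded values, precisely because the sets $H_b$ enforce every membership condition. Thus the minima (and secondary maxima) over guesses and over allocations agree.

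I expect the delicate step to be the correctness argument that fixing the ordered-pair bits genuinely decouples the otherwise global envy constraints into independent single-agent membership constraints, with the envy of each $a$ depending only on its own happiness and on its neighbours' houses and with no double counting across edges; this is exactly what legitimizes the matching formulation. Once that is established, the running-time analysis is routine: there are $2^{n+2|E_\AA|}$ guesses, and each is processed in $(n+m)^{\OO(1)}$ time to build the sets $H_b$ and run a matching algorithm, giving the stated bound $2^{n+2|E_\AA|}\cdot(n+m)^{\OO(1)}$.
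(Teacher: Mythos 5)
Your proposal is correct and follows essentially the same approach as the paper: enumerate $2^{n+2|E_\AA|}$ guesses of the combinatorial data that determines envy, translate each guess into per-agent admissible house sets, and decide realizability with a bipartite matching saturating $\AA$. The only difference is a harmless reparametrization of the guess space --- you guess a happiness bit per agent plus a bit $x_{a,a'}$ per ordered neighbour pair, whereas the paper guesses the envy sets $\EE_\phi^\HH(a) \subseteq N_{\GG_\AA}(a)$ together with a happy subset $C$ of the non-envious agents --- and both yield the same count, the same per-agent constraints, and the same matching-based feasibility check.
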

\begin{proof}
Consider the instance $(\AA, \HH, \GG(\AA, E_\AA), (\PP_a)_{a \in \AA})$. Agent $a$ is envious on agent $a'$ only if $\{a, a'\}$ is an edge in $E_\AA$. The flow of our algorithm will be as follows: (i) guess which agent is envious of which other agent, (ii) among the non-envious agents guess which subset of agent gets their preferred houses, (iii) use a bipartite matching algorithm to decide if such a allotment exists which matches our guesses.

For an allotment $\phi$, recall that $\EE_\phi^\HH(a)$ is the set of agents $a'$ where $a$ is envious on $a'$. If $N_{\GG_\AA}(a)$ is the set of neighbouring agents of $a$, then $\EE_\phi^\HH(a) \subseteq N_{\GG_\AA}(a)$. Since $|N_{\GG_\AA}(a)| = \deg(a)$ is the degree of $a$ in $\GG_\AA$, there are $2^{\deg(a)}$ possible choices of $\EE_\phi^\HH(a)$. We make a guess of $\EE_\phi^\HH(a)$ for all $a \in \AA$.

The agents with $\EE_\phi^\HH(a) = \emptyset$ are non-envious. A fraction of these agents receive their preferred houses, while all other agents receive a house not preferred by them. We make a guess $C \subseteq \{a \in \AA \mid \EE_\phi^\HH(a) = \emptyset\}$ of the agents who receive their preferred houses. There are at most $2^n$ possible guesses of $C$.

We now fix $C$ and $(\EE_\phi^\HH(a))_{a \in \AA}$ and we wish to check if there are any allotments which matches such a guess. We maintain a set $\FF_a \subseteq \HH$ for each $a \in \AA$ which would denote the houses that can be assigned to $a$ such that it matches the guess of $C$ and $(\EE_\phi^\HH(a))_{a \in \AA}$. We compute $\FF_a$ as follows.

\begin{itemize}
    \item We start with $\FF_a = \HH$ and iteratively trim $\FF_a$ by imposing restrictions, one by one.
    \item If $a \notin C$, then $\phi(a) \notin \PP_a$. Thus $\FF_a$ cannot contain any house in $\PP_a$. While, for $a \in C$, $\FF_a$ can only contain houses in $\PP_a$.
    \item For an edge $\{a, a'\}$ in $E_\AA$, if $a \in \EE_\phi^\HH(a')$, then $\phi(a') \notin \PP_{a'}$ but $\phi(a) \in \PP_{a'}$. Thus $\FF_a$ can contain only houses in $\PP_{a'}$. Note that we do not need to impose any further constraints on $\FF_{a'}$ as the previous step already trimmed $\FF_{a'}$ owing to $a \notin C$.
    \item Otherwise, for an edge $\{a, a'\}$ in $E_\AA$, if $a \in \EE_\phi^\HH(a')$ and $a' \notin C$, then $\phi(a') \notin \PP_{a'}$ and $\phi(a) \notin \PP_{a'}$. Therefore, $\FF_{a}$ cannot contain any house in $\PP_{a'}$.
    \item Finally, if for $\{a,a'\} \in E_\AA$, if $a \notin \EE_\phi^\HH(a')$ and $a' \in C$, then $\phi(a') \in \PP_{a'}$. Any house can be assigned to $a$ without making $a'$ envious; this imposes no further restrictions on $\FF_a$.
\end{itemize}

Note that, by construction of $(\FF_a)_{a \in \AA}$, the set of allotments $\phi$ that correspond to the guesses $C$, and $(\EE_\phi^\HH(a))_{a \in \AA}$ are exactly the set of allotments $\phi$, satisfying $\phi(a) \in \FF_a$. Since all such $\phi$ have the same envy, $\xi = |\{a \in \AA \mid \EE_\phi^\HH(a) \ne \emptyset\}|$, it suffices to check if at least one $\phi$ exists that correspond to the guesses of $C$ and $(\EE_\phi^\HH(a))_{a \in \AA}$. This check can be done using any maximum bipartite matching algorithm on the graph defined by $(\FF_a)_{a \in \AA}$. 

We already argued the correctness of this algorithm, all that remains is to argue its running time. For a fixed guess of $(\EE_\phi^\HH(a))_{a \in \AA}$ and $C$, the algorithm runs in polynomial time, including the maximum bipartite matching subroutine. There are at most $2^n$ guesses of $C$, while there are $2^{\deg(a)}$ guesses of $\EE_\phi^\HH(a)$. Therefore, the total number of guesses of $(\EE_\phi^\HH(a))_{a \in \AA}$ and $C$ are,

$$2^n \cdot \prod \limits_{a \in A} 2^{\deg(a)} = 2^{n + 2|E_\AA|}$$

Hence the algorithm terminates with a correct output in time $2^{n + 2|E_\AA|} \cdot (n + m)^{\OO(1)}$, where $n = |\AA|$ and $m = |\HH|$.
\end{proof}

We name this as Algorithm~\ref{alg:exp-edge}, the pseudocode of which is given below. 
Note that this algorithm is more efficient than the brute force for graphs with $|E_\AA| = o(n \log m)$, these include sparse graph classes like planar graphs, bounded treewidth graphs and bounded genus graphs.

\begin{algorithm}[!htbp]
    \caption{ \hfill \textbf{Input:} $\AA, \HH, \GG_\AA(\AA,E_\AA), (\PP_a)_{a \in \AA}$ \hfill \textbf{Output:} Minimum envy}\label{alg:exp-edge}
    \begin{algorithmic}[1]
        \State{$\xi \gets n$} \Comment{In the worst case, all agents are envious}
        \For{$(\EE_\phi^\HH)_{a \in \AA} \in (2^{N_{\GG_\AA}(a)})_{a \in \AA}$} \Comment{Guess $\EE_\phi^\HH(a) \subseteq N_{\GG_\AA}(a)$ for all $a \in \AA$}
            \For{$C \subseteq \{a \in \AA \mid \EE_\phi^\HH(a) = \emptyset\}$} \Comment{Guess $C \subseteq \AA$, agents getting preferred house}
                \For{$a \in \AA$}
                    \If{$a \in C$}
                        \State{$\FF_a \gets \PP_a$}
                    \Else
                        \State{$\FF_a \gets \HH \setminus \PP_a$}
                    \EndIf
                \EndFor
                \For{$\{a,a'\} \in E_\AA$}
                    \If{$a \in \EE_\phi^\HH(a')$} \Comment{Trim $\FF_a$}
                        \State{$\FF_a \gets \FF_a \cap \PP_{a'}$}
                    \ElsIf{$a' \notin C$}
                        \State{$\FF_a \gets \FF_a \setminus \PP_{a'}$}
                    \EndIf
                    \If{$a' \in \EE_\phi^\HH(a)$} \Comment{Repeat for $a'$}
                        \State{$\FF_{a'} \gets \FF_{a'} \cap \PP_{a}$}
                    \ElsIf{$a \notin C$}
                        \State{$\FF_{a'} \gets \FF_{a'} \setminus \PP_{a}$}
                    \EndIf
                \EndFor
                \State{$E_\FF \gets \{\{a, h\} \mid a \in \AA, h \in \FF_a$\}; $\GG_\FF \gets (\HH \cup \AA, E_\FF)$} \Comment{Feasibility bipartite graph}
                \If{maximum cardinality bipartite matching of $\GG_\FF$ is of size $|\AA|$}
                    \State{$\xi \gets \min(\xi, |\{a \in \AA \mid \EE_\phi^\HH(a) \ne \emptyset\}|$)}
                \EndIf
            \EndFor
        \EndFor 
        \State{\Return $\xi$}
    \end{algorithmic}
\end{algorithm}

\begin{remark}
    Algorithm~\ref{alg:exp-edge} can be trivially adapted to output the actual allocation in the same asymptotic runtime. Moreover, the algorithm can also be adapted to solve \ohaah; this can be done by maximizing $C$.
\end{remark}

\subsection{Balanced Separator Based Algorithm}

The notion of envy is intrinsically linked to the underlying graph structure $\GG_\AA$ and the exact edges play an important role in contributing to the total envy of any allotment. Two agents which do not share an edge in $E_\AA$ can never be envious of each other. So, it makes sense to look into separators of the graph $\GG_\AA$ which separate $\AA$ into nearly equal sized $\AA_1$ and $\AA_2$ which are independent of each other in the problem.

We start of with a definition of \emph{$f(n)$-balanced-separable} graph classes.

\begin{definition}
    A graph class $\mathfrak{G}$ is $f(n)$-balanced-separable, if for all graphs $\GG \in \mathfrak{G}$ of $n$ vertices, 
    \begin{itemize}
        \item all induced subgraphs of $\GG$ are also contained in $\mathfrak{G}$,
        \item there exists a partition $(S, V_1, V_2)$ of the vertex set of $\GG$ such that $|S| \le f(n)$, $|V_1|, |V_2| \le 2n/3$ and there is no edge in $\GG$ with one endpoint in $V_1$ and the other in $V_2$.
    \end{itemize}
\end{definition}

For example, the class of planar graphs are $\OO(\sqrt{n})$-balanced-separable (due to the planar separator theorem~\cite{lipton1979separator}), graphs of treewidth $\tw$ are $(\tw+1)$-balanced-separable.

We look into the problem of \ohaa when the underlying graph $\GG_\AA (\AA, E_\AA)$ is from an $f(n)$-balanced-separable graph class. Let $S \subseteq \AA$ be a subset of agents of size $f(n)$, such that $\AA \setminus S$ can be partitioned into $\AA_1, \AA_2$ where $|\AA_1|, |\AA_2| \le 2n/3$ and there is no edge between $\AA_1$ and $\AA_2$. Thus, the exact allotment of houses to agents in $\AA_2$ would not affect which agents in $\AA_1$ would be envious. This would allow us to recurse into independent subproblems with $\AA_1$ and $\AA_2$ as the set of agents.

We define a harder version of \ohaa, called \ohaab. Firstly, we allow the input to specify a feasibility set $\FF_a$ for each agent $a \in \AA$ which asks to only consider allotments $\phi$ satisfying $\phi(a) \in \FF_a$. Secondly, there is a provision to mark agents `angry'; an angry agent becomes envious whenever they don't get a house they prefer (irrespective of what their neighbours are assigned). The input specifies a subset $B \subseteq \AA$ of agents such that an agent $a \in B$ is envious if $\phi(a) \notin \PP_a$. As usual, for all other agents $a \in \AA \setminus B$, the agent $a$ is envious if $\phi(a) \in \PP_a$, and there exists $\{a, a'\} \in E_\AA$ such that $\phi(a') \in \PP_a$. 

\defproblem{\ohaab}{$\AA, \HH, \GG_\AA(\AA, E_\AA), (\PP_a)_{a \in \AA}, (\FF_a)_{a \in \AA}, B$.}{Compute an allocation $\phi$ satisfying $\phi(a) \in \FF_a$ which minimizes 
$$|\{a \in \AA \setminus B \mid \phi(a) \notin \PP_a, \exists a' \in \AA, \phi(a') \in \PP_a\}| + |\{a \in B \mid \phi(a) \notin \PP_a\}|$$} 

\begin{remark}
    \ohaab becomes the problem \ohaa when $B = \emptyset$, and $\FF_a = \HH$ for all $a \in \AA$.
\end{remark}

We now propose a recursive algorithm to solve \ohaab using balanced separators of size $f(n)$. 

\begin{theorem}\label{thm:sep-rec}
    There exists an algorithm that solves \ohaab, and hence \ohaa for the instance $(\AA, \HH, \GG_\AA(\AA, E_\AA), (\PP_a)_{a \in \AA})$ in time $2^{m + \OO(f(n) \cdot \log^2 n + n)}$ where $n = |\AA|$ and $m = |\HH|$ when the input graph is from a $f(n)$-balanced-separable graph class.
\end{theorem}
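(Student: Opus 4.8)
The plan is to solve the more general \ohaab recursively along a balanced-separator decomposition, using the two extra ingredients of \ohaab (the feasibility sets $\FF_a$ and the angry set $B$) precisely to transmit the interface information across a separator. First I would take a balanced separator $S$ with $|S| \le f(n)$ splitting $\AA \setminus S$ into non-adjacent parts $\AA_1, \AA_2$ of size $\le 2n/3$ (which exists, and whose induced subgraphs stay in the class, by $f(n)$-balanced-separability). The structural heart of the argument is that, since no edge joins $\AA_1$ to $\AA_2$, once the houses assigned to $S$ are fixed the envy of every agent of $\AA_1$ depends only on the allocation restricted to $\AA_1 \cup S$, and symmetrically for $\AA_2$; hence after fixing $S$'s assignment the two sides become independent subinstances whose minimum envies simply add, together with the envy incurred on $S$. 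To push $S$'s influence into a side I would encode it in the annotations: any $a \in \AA_i$ that already has an $S$-neighbour receiving a house in $\PP_a$ is placed into the angry set $B$ of the subinstance (it is envious exactly when it misses a preferred house), and every $\FF_a$ is trimmed to forbid the houses occupied by $S$.

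The delicate point is the envy of the separator vertices themselves, the one quantity genuinely depending on both sides. I would handle it by additionally guessing, for each $b \in S$, a bit $e_b$ recording whether $b$ is envious (at most $2^{|S|}$ guesses). When $e_b = 0$ I impose the universal constraint that no neighbour of $b$, whether in $S$, $\AA_1$ or $\AA_2$, receives a house in $\PP_b$, which is pushed directly into the feasibility sets of the subinstances; when $e_b = 1$ I charge one unit of envy and impose nothing. The latter can over-count, but since the algorithm returns the minimum over all guesses and the guess matching an optimal allocation is always available, the over-count is harmless, and this conveniently sidesteps having to enforce the awkward existential constraint ``some neighbour of $b$ is satisfied''. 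Because $\AA = \AA_1 \sqcup \AA_2 \sqcup S$, summing the three contributions counts each agent once, so the recurrence is faithful.

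The main obstacle is the running time, specifically obtaining the clean $2^m$ factor rather than the $2^{\Theta(m \log n)}$ that a naive per-level partition of the $m$ houses would compound to over the $\OO(\log n)$ recursion depth. I would first guess, once and globally, the set $U \subseteq \HH$ of the $n$ houses actually used ($\binom{m}{n} \le 2^m$ choices) and run the whole recursion with the pool $U$ of size exactly $n$. Then at each call I guess the injective assignment $\psi \colon S \to U$ (at most $n^{|S|}$ choices, i.e.\ $|S| \log n$ bits) together with a split of the remaining houses of $U$ between the two sides (at most $2^{|\AA_v|}$ choices), recursing so that the two sides draw disjoint houses and $|U_v| = |\AA_v|$ is maintained down to the base case $|\AA| = 1$, which is solved directly.

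For the bound, along any root-to-leaf path the depth is $\OO(\log n)$ and $\sum_v |S_v| = \OO(f(n)\log n)$, so the $\psi$-guesses contribute $n^{\OO(f(n)\log n)} = 2^{\OO(f(n)\log^2 n)}$, while the house-splits telescope geometrically to $2^{\OO(n)}$ since $\sum_v |\AA_v| = \OO(n)$; the $e_b$ and angry-set bookkeeping is dominated by these. Multiplying by the initial $2^m$, by the $\OO(n)$ nodes of the recursion tree, and by the polynomial work per node yields the claimed $2^{m + \OO(f(n)\log^2 n + n)}$. The fussiest part of the write-up, I expect, will be verifying that the feasibility-set and angry-set manipulations reproduce in each subinstance exactly the set of agents envious under the global allocation, so that no envy is double-counted across $\AA_1, \AA_2, S$ and every cross-separator trigger is correctly captured through $B$ and the bits $e_b$.
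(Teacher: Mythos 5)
Your overall architecture coincides with the paper's: recurse on a balanced separator $S$, guess the allocation on $S$, transmit its influence into the two sides via the angry set $B$ and trimmed feasibility sets $\FF_a$, guess the split of the remaining houses between the sides, and pay the $2^m$ factor up front by guessing the set of houses actually used; your run-time accounting also matches the paper's recurrence. However, your handling of the separator vertices' own envy contains a genuine error. You guess a bit $e_b$ for \emph{every} $b \in S$ and, whenever $e_b = 0$, impose that no neighbour of $b$ receives a house in $\PP_b$. This is wrong when $b$ is itself assigned one of its preferred houses: such a $b$ is non-envious unconditionally, and no constraint on its neighbours is permitted. Concretely, take the path $a_1 - b - a_2$ with houses $h_1, h_2, h_3$, preferences $\PP_b = \{h_1, h_2\}$, $\PP_{a_1} = \{h_2\}$, $\PP_{a_2} = \{h_3\}$, with $B = \emptyset$ and $\FF_a = \HH$ for all agents, and separator $S = \{b\}$. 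The allocation $b \mapsto h_1$, $a_1 \mapsto h_2$, $a_2 \mapsto h_3$ has zero envy; but under your scheme the guess $e_b = 0$ forbids $a_1$ and $a_2$ from taking any house in $\{h_1,h_2\}$, leaving no feasible completion, while $e_b = 1$ charges one unit for $b$. Hence every guess yields value at least $1$, the algorithm outputs $1 > 0$, and your claim that ``the guess matching an optimal allocation is always available'' fails. A symmetric problem arises for an angry vertex $b \in B$ whose guessed house is not in $\PP_b$: it is envious no matter what its neighbours receive, yet the guess $e_b = 0$ charges nothing and imposes only neighbour constraints, so unless such internally inconsistent guesses are explicitly discarded the minimum can also undercount.

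The repair is exactly the case analysis the paper performs after fixing the allocation on $S$: partition $S$ into $C$ (vertices receiving a preferred house: non-envious, no constraint imposed), $Q$ (angry vertices missing every preferred house: envious, forced), $D$ (non-angry vertices already envious of another separator vertex: envious, forced), and $R = S \setminus (C \cup D \cup Q)$, and guess envy bits (the paper's set $K \subseteq R$) only for the genuinely undetermined vertices in $R$, trimming the neighbours' feasibility sets only for the vertices guessed non-envious there. With this local correction your argument, including the $2^{m + \OO(f(n)\log^2 n + n)}$ bound, goes through and is essentially the paper's proof.
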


\begin{proof}
Let $(\AA, \HH, \GG_\AA(\AA, E_\AA), (\PP_a)_{a \in \AA})$ be an instance of \ohaa. Firstly, we only look into the case where the number of houses is same as the number of agents, i.e. $n = m$. If that is not the case, we can guess which houses appear in the final allotment, which blows up the final algorithm by a factor of at most $2^m$. 

Let $S \subseteq \AA$ be any subset of agents of size $f(n)$, such that $\AA \setminus S$ can be partitioned into $\AA_1, \AA_2$ where $|\AA_1|, |\AA_2| \le 2n/3$ and there is no edge between $\AA_1$ and $\AA_2$. $S$ can be computed naively using a brute force algorithm that enumerates all $\binom{n}{f(n)}$ subsets of size $f(n)$ and checking which of them form balanced separators. 

Our next step is to guess the houses that are allocated to agents in $S$, i.e. we guess $\phi(a) \in \FF_a$ for all $a \in S$. There are at most $n^{f(n)}$ many such guesses. We denote by $\phi(S)$ the set $\{\phi(a) \mid a \in S\}$. For a subset of agents in $S$, $\phi$ might already assign them their preferred houses, call this $C = \{a \in S \mid \phi(a) \in \PP_a\}$. Some agents in $S$ which are not angry (i.e. not in $B$) might be already envious of some other agent in $S$, call this set $D = \{a \in S \setminus B \mid \phi(a) \notin \PP_a, \exists a' \in S \cap N_{\GG_\AA}(a), \phi(a') \in \PP_a\}$. Some other agents who were angry might become envious from not getting a preferred house, call this set $Q = \{a \in S \cap B \mid a \notin \PP_a\}$. Let the remaining agents in $S$ be $R = S \setminus (C \cup D \cup Q)$.

The agents in $C$ are non-envious, and the agents in $D \cup Q$ are envious regardless of the allocation of the agents in $\AA_1 \cup \AA_2$. However, the agents in $R$ can be either envious or non-envious depending on what houses are allocated to which agents in $\AA_1 \cup \AA_2$. Note that $B \cap S \subseteq C \cup Q$, therefore, $R$ does not contain any angry agent in $A \in B$. Agents $a$ in $R$ got assigned some house other than their preference, so they are envious if and only if any of their neighbour in $\AA_1 \cup \AA_2$ gets assigned a house in $\PP_a$. 

We guess a subset $K \subseteq R$ to be non-envious in the final allocation. There are at most $2^{f(n)}$ such guesses. If an agent $a$ in $K$ shares an edge of $E_\AA$ with an agent $a'$ in $\AA_1 \cup \AA_2$, then $\phi(a') \notin \PP_a$. To model this, we trim the feasible set $\FF_{a'}$ to $\FF_{a'} \setminus \PP_a$. Call this modified feasibility set $(\FF'_a)_{a \in \AA_1 \cup \AA_2}$. Note that even agents in $R \setminus K$ could turn out to be non-envious, but this does not affect the correctness of the algorithm as some other guess of $K$ will match the exact subset of envious agents in a minimum envy allotment $\phi$. 

Let $a'$ be an agent in $\AA_1 \cup \AA_2$. Assume that there is an agent $a \in S$ such that the guessed allocation $\phi(a)$ is in $\PP_{a'}$. Therefore, $a'$ becomes envious if and only if $a'$ does not get a house in $\PP_{a'}$. We can therefore mark $a'$ as `angry' and include it in the set $B$, call this new set $B'$. This allows us to delete the information about the exact allocation in $S$ and recurse into independent subproblems in $\AA_1$ and $\AA_2$. To do this, we guess a subset of houses $\HH_1 \subseteq (\HH \setminus \phi(S))$ of size $|\AA_1|$ which are allocated to agents in $\AA_1$. There are $\binom{n}{|\AA_1|}$ many such guesses. Then $\HH_2 = \HH \setminus (\phi(S) \cup \HH_1)$ is the set of houses allocated to $\AA_2$. We update $\PP_a$ to $\PP_a \cap \HH_1$ if $a \in \AA_1$, or to $\PP_a \cap \HH_2$ if $a \in \AA_2$; call this updated preferences as $\PP'_a$. Let us denote by $\GG_\AA[\AA_1]$ the induced graph of $\GG_\AA$ by the set $\AA_1$, and similarly for $\GG_\AA[\AA_2]$.


Let `OHAANR' be a procedure that outputs the minimum envy for the problem of \ohaab. If $\zeta$ is the minimum envy corresponding to the guesses made by the algorithm so far, then we get the following relation:

\begin{align*}
\zeta =& |D| + |Q| + |R \setminus K| & [\text{envious agents in } S]\\ 
&+ \text{OHAANR}(\AA_1, \HH_1, \GG_\AA[\AA_1], (\PP'_a)_{a \in \AA_1}, (\FF'_a)_{a \in \AA_1}, B' \cap \AA_1) & [\text{envious agents in } \AA_1]\\
&+ \text{OHAANR}(\AA_2, \HH_2, \GG_\AA[\AA_2], (\PP'_a)_{a \in \AA_2}, (\FF'_a)_{a \in \AA_2}, B' \cap \AA_2) & [\text{envious agents in } \AA_2]
\end{align*}

Moreover, the minimum envy for the problem of \ohaab in the original instance $(\AA, \HH, \GG_\AA, (\PP_a)_{a \in \AA}, (\FF_a)_{a \in \AA}, B)$ is basically the minimum of $\zeta$ over all guesses of $(\phi(a))_{a \in S}$, $\HH_1$ and $K$.

$$\text{OHAANR}(\AA, \HH, \GG_\AA, (\PP_a)_{a \in \AA}, (\FF_a)_{a \in \AA}, B) = \min \limits_{(\phi(a))_{a \in S}, \HH_1, K} \big\{ \zeta \big\}$$

This allows us to design a recursive algorithm. The correctness of this algorithm is trivial from its design, as it recursively enumerates all possible allocations. We now analyze its running time.

Let $T(n)$ be the maximum running time that this algorithm takes on an input with $n$ agents and $m = n$ houses. Finding a balanced separator of size $f(n)$ takes time at most $\binom{n}{f(n)} \cdot n^2$. The total number of choices of $\phi(a)$ for $a \in S$ is at most $n^{f(n)}$. $\HH_1$ has $\binom{n}{|\AA_1|} \le 2^{n-1}$ choices. $K$ has $2^{f(n)}$ many choices. All other operations (except the recursion) take time polynomial in $n$. Therefore, they together take time:

$$\binom{n}{f(n)} \cdot (2n)^{f(n)} \cdot \binom{n}{|\AA_1|} \cdot n^{\OO(1)}= 2^{\OO(\log n) \cdot f(n) + (n - 1)}$$

After fixing $\phi(a)$ for $a \in S$, sets $K$ and $\HH_1$, the recursed instances take time at most $T(|\AA_1|)$ and $T(|\AA_2|)$. Since $|\HH_1| = |\AA_1| \le 2n/3$ and $|\HH_2| = |\AA_2| \le 2n/3$, we can upper bound both of these expressions by $T(2n/3)$. Moreover, $T(1) = n^{\OO(1)}$. This gives us the following recursive relation on the running time.

\begin{align*}
    T(n) &= 2^{\OO(\log n) \cdot f(n) + (n - 1)} \cdot 2 T(2n/3) = 2^{\OO(\log n) \cdot f(n) + n} \cdot T(2n/3)\\
    &= 2^{\OO(\log n) \cdot (f(n) + f(2n/3) + f(4n/9) + f(8n/27) + \cdots) + (n + 2n/3 + 4n/9 + 8n/27 + \cdots)} \\ 
    &= 2^{\OO(\log n) \cdot (f(n) + f(2n/3) + f(4n/9) + f(8n/27) + \cdots) + 3n}
\end{align*}

For arbitrary $f(n)$, we can bound $(f(n) + f(2n/3) + f(4n/9) + f(8n/27) + \cdots)$ by $f(n) \OO(\log n)$. This gives us $T(n) = 2^{\OO(f(n) \cdot \log^2 n + n)}$. As mentioned earlier, for $m \ne n$, the running time would become $2^{m + \OO(f(n) \cdot \log^2 n + n)}$.
\end{proof}

The pseudocode of this algorithm, named Algorithm~\ref{alg:sep-rec} is given below.

\begin{algorithm}[!htbp]
    \caption{ \hfill \textbf{Input:} $\AA, \HH, \GG_\AA(\AA,E_\AA), (\PP_a)_{a \in \AA}, (\FF_a)_{a \in \AA}, B$ \hfill \textbf{Output:} Minimum envy}\label{alg:sep-rec}
    \begin{algorithmic}[1]
        \Procedure{OHAANR}{$\AA, \HH, \GG_\AA(\AA,E_\AA), (\PP_a)_{a \in \AA}, (\FF_a)_{a \in \AA}, B$}
            \If{$\AA = \emptyset$} \Comment{Agent set is empty}
                \State{\Return $0$} \Comment{No agent is envious}
            \EndIf
            \State{$\xi \gets \infty$} \Comment{In the worst case, no assignment exists respecting $(\FF_a)_{a \in \AA}$}
            \State{$S \gets$ a balanced separator of size $f(n)$}
            \State{$\AA_1, \AA_2 \gets$ the separated parts of $\AA \setminus S$} \Comment{$|\AA_1|, |\AA_2| \le 2n/3$}
            \For{all choices of $\phi(a) \in \FF_a$ over all $a \in S$}
                \State{$C \gets \{a \in S \mid \phi(a) \in \PP_a\}$} \Comment{Agents with preferred house}
                \State{$D \gets \{a \in S \setminus B \mid \phi(a) \notin \PP_a, \exists a' \in S \cap N_{\GG_\AA}(a), \phi(a') \in \PP_a\}$} \Comment{Envious, non-angry}
                \State{$Q \gets \{a \in S \cap B \mid a \notin \PP_a\}$} \Comment{Envious, angry agents}
                \State{$R \gets S \setminus (C \cup D \cup Q)$} \Comment{Other agents}
                \State{$B' \gets B \cup \{a' \in \AA \setminus S \mid \exists a \in S, \{a,a'\} \in E_\AA, \phi(a) \in \PP_{a'}\}$} \Comment{Updated angry agents}
                \For{$\HH_1 \subseteq \HH \setminus \phi(S)$}
                    \State{$\HH_2 \gets \HH \setminus (\phi(S) \cup \HH_1)$}
                    \For{$K \subseteq R$}
                        \For{$a \in \AA_1$}
                            \State{$\FF'_{a} \gets \FF_a \cap \HH_1$}
                            \State{$\PP'_{a} \gets \PP_a \cap \HH_1$}
                        \EndFor
                        \For{$a \in \AA_2$}
                            \State{$\FF'_{a} \gets \FF_a \cap \HH_2$}
                            \State{$\PP'_{a} \gets \PP_a \cap \HH_2$}
                        \EndFor
                        \For{$a \in K$, $a' \in \AA \setminus S$ such that $\{a,a'\} \in E_\AA$}
                            \State{$\FF'_{a'} \gets \FF'_{a'} \setminus \PP_a$}
                        \EndFor
                        \State{$\xi_1 \gets \text{OHAANR}(\AA_1, \HH_1, \GG_\AA[\AA_1], (\PP'_a)_{a \in \AA_1}, (\FF'_a)_{a \in \AA_1}, B' \cap \AA_1)$}
                        \State{$\xi_2 \gets \text{OHAANR}(\AA_2, \HH_2, \GG_\AA[\AA_2], (\PP'_a)_{a \in \AA_2}, (\FF'_a)_{a \in \AA_2}, B' \cap \AA_2)$}
                        \State{$\xi \gets \min(\xi, |D| + |Q| + |R \setminus K| + \xi_1 + \xi_2)$}
                    \EndFor
                \EndFor
            \EndFor 
            \State{\Return $\xi$}
        \EndProcedure
    \end{algorithmic}
\end{algorithm}

Again, the algorithm can be trivially backtracked to provide the exact allocation $\phi$. Moreover, for all assignments of minimum envy, if we wish to maximize the number of agents getting a preferred house, this could also be done simply by maximizing $C = \{a \in S \mid \phi(a) \in \PP_a\}$ among all guesses of $C$ that lead to the same envy. 

\begin{remark}
    Algorithm~\ref{alg:sep-rec} can be trivially adapted to output the actual allocation in the same asymptotic runtime. Moreover, the algorithm can also be adapted solve \ohaah.
\end{remark}

Now, consider inputs with the guarantee $m = \OO(n)$. For treewidth $\le \tw$ graphs, this algorithms has worst case running time of $2^{\OO(\tw \log^2 n + n)}$. For  
$f(n) = \OO\left(\frac{n}{\log n}\right)$, the algorithm has the following worst case runtime:

\begin{align*}
    T(n) &= 2^{\OO\left(\log n \cdot \left(\frac{n}{\log n} + \frac{2n/3}{\log(2n/3)} + \frac{4n/9}{\log(4n/9)} + \cdots\right) + n\right)} &\\
    &= 2^{\OO\left(\log n \cdot \left(\frac{n}{\log n} + \frac{2n/3}{(3/4)\log(n)} + \frac{4n/9}{(9/16)\log(n)} + \cdots\right) + n\right)} & \text{as } (3/4)^r \log n \le \log((2/3)^r n) \\
    &= 2^{\OO\left(\log n \cdot \left(\frac{n}{\log n} + \frac{2n/4}{\log n} + \frac{4n/16}{\log n} + \cdots\right) + n\right)} = 2^{\OO(n)} & 
\end{align*}

In particular, these include planar graphs, bounded genus graphs~\cite{gilbert1984separator}, and any proper minor closed graph classes~\cite{kawarabayashi2010separator}. 

\begin{corollary}
    Consider an input $(\AA, \HH, \GG_\AA(\AA, E_\AA), (\PP_a)_{a \in \AA})$ with $m$ houses and $n$ agents and with the guarantee that $m = \OO(n)$. Then Algorithm~\ref{alg:sep-rec} solves \ohaab and hence \ohaa with the following worst case running times,
    \begin{itemize}
        \item $2^{\OO(\tw \log^2 n + n)}$ for treewidth $\le \tw$ graphs.
        \item $2^{\OO(n)}$ for $\OO\left(\frac{n}{\log n}\right)$-balanced-separable graphs.
    \end{itemize}
\end{corollary}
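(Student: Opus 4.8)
The plan is to obtain both running-time bounds as direct instantiations of Theorem~\ref{thm:sep-rec}, whose bound $2^{m + \OO(f(n)\cdot\log^2 n + n)}$ is already parametrized by the separator-size function $f(n)$. The hypothesis $m = \OO(n)$ lets me fold the leading $2^m$ factor into a $2^{\OO(n)}$ term at the outset, so that for every $f(n)$-balanced-separable class the bound simplifies to $2^{\OO(f(n)\log^2 n + n)}$; the two items then differ only in the value substituted for $f$.

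For the first item I would recall the fact noted just after the definition of balanced separability, namely that a graph of treewidth at most $\tw$ is $(\tw+1)$-balanced-separable. Substituting $f(n) = \tw + 1 = \OO(\tw)$ into the simplified bound immediately yields $2^{\OO(\tw\log^2 n + n)}$. The one point to verify is that the separators used at every level of Algorithm~\ref{alg:sep-rec} can still be taken of size $\tw+1$: this is exactly where the hereditary clause of $f(n)$-balanced-separability is needed, since the recursion descends into the induced subgraphs $\GG_\AA[\AA_1]$ and $\GG_\AA[\AA_2]$, whose treewidth does not exceed $\tw$.

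The second item requires going back into the recurrence rather than quoting the generic bound $2^{\OO(f(n)\log^2 n + n)}$, which for $f(n)=\OO(n/\log n)$ would only give the too-weak $2^{\OO(n\log n)}$. Starting from the relation $T(n) = 2^{\OO(\log n)\cdot\left(f(n)+f(2n/3)+f(4n/9)+\cdots\right)+3n}$ established inside the proof of Theorem~\ref{thm:sep-rec}, I would substitute $f(x)=\OO(x/\log x)$ and estimate the $r$-th summand using the inequality $\log((2/3)^r n)\ge(3/4)^r\log n$, valid for large $n$ throughout the $\OO(\log n)$ recursion levels. This converts $\log n\cdot f((2/3)^r n)$ into $\OO\!\left((2/3)^r(4/3)^r n\right)=\OO\!\left((8/9)^r n\right)$, so that $\OO(\log n)\cdot\sum_{r\ge0} f((2/3)^r n)$ telescopes into a convergent geometric series summing to $\OO(n)$. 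Together with the additive $3n$ term the exponent is $\OO(n)$, giving $T(n)=2^{\OO(n)}$; the absorbed $2^m=2^{\OO(n)}$ leaves this unchanged.

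The main obstacle is precisely this geometric-series estimate for the second item. One must check that the logarithmic denominators shrink slowly enough that each recursion level contributes a constant factor less than the previous one, so that pulling $\log n$ naively out of the sum (as in the generic analysis) is avoided. The whole argument hinges on the ratio $8/9$ being strictly below $1$, which is what separates the target $2^{\OO(n)}$ from a spurious $2^{\OO(n\log n)}$; everything else is routine substitution and bookkeeping already licensed by Theorem~\ref{thm:sep-rec}.
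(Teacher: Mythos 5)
Your proposal follows the paper's own argument essentially step for step: the treewidth bound comes from plugging $f(n)=\tw+1$ into Theorem~\ref{thm:sep-rec} with the $2^m$ factor absorbed via $m=\OO(n)$, and the second bound is obtained, exactly as in the paper, by re-entering the recurrence $T(n) = 2^{\OO(\log n)\cdot(f(n)+f(2n/3)+f(4n/9)+\cdots)+3n}$ and applying the inequality $(3/4)^r\log n \le \log((2/3)^r n)$ to turn the sum into a convergent geometric series with exponent $\OO(n)$. Incidentally, your ratio $8/9$ is the correct value of $(2/3)^r/(3/4)^r$ — the paper's displayed simplification (terms $2n/4$, $4n/16$, i.e.\ ratio $1/2$) contains a minor arithmetic slip — but since both series converge geometrically, the conclusion $2^{\OO(n)}$ is the same.
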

\section{Complexities for various Vertex Cover sizes of the Agent Graph} \label{sec:vc}

Input graphs with small vertex covers are particularly interesting to explore. Such graphs frequently arise in modeling subordinate-supervisor relationships in infrastructures with limited number of supervisors. In this Section, we first design an exact algorithm for \ohaa and \ohaah with a running time of $\OO((2m)^k \cdot (n + m)^{\OO(1)})$, where $m$ is the number of houses, $n$ is the number of agents in the agent graph $\GG_\AA$ and $k$ is the size of a vertex cover of $\GG_\AA$. Note that this algorithm performs better than Algorithm~\ref{alg:exp-edge} when $k = o \left(\frac{n+|E_\AA|}{\log m} \right)$, where $|E_\AA|$ is the number of edges in the agent graph. This algorithm also has quasi-polynomial running time when the input agent graphs have polylogarithmic-sized vertex covers.
On the other hand, we also show that the problems are \NPH even when the agent graph is a split graph or a bipartite graph with a vertex cover of size $n^{\varepsilon}$, where $n$ is the number of agents, and $\varepsilon \in (0,1)$ is a constant.

Let the graph on agents $\AA$ be $\GG_\AA = (\AA, E_\AA)$. We look into a slice-wise polynomial (XP) algorithm solving \ohaa with the vertex cover number as the parameter. 

\begin{theorem}\label{thm:xp-vc}
    Consider an input $(\AA, \HH, \GG_\AA(\AA, E_\AA), (\PP_a)_{a \in \AA})$ with $m$ houses and $n$ agents. There exists an algorithm solving \ohaa in time $\OO((2m)^k \cdot (n + m)^{\OO(1)})$, where $k$ is the size of a vertex cover of $\GG_\AA$.
\end{theorem}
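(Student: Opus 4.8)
The plan is to exploit a minimum vertex cover $U$ of $\GG_\AA$ of size $k$ (computable within the stated budget, since \textsc{Vertex Cover} is fixed-parameter tractable in $k$ and $2^{\OO(k)} \le (2m)^k$), and to observe that $I = \AA \setminus U$ is an independent set. Consequently no two agents of $I$ are neighbours, so an agent $a \in I$ can only envy its neighbours in $U$, and whether $a$ is envious is fully determined once we fix the houses assigned to $U$ together with $a$'s own house. The difficulty is the opposite direction: a vertex-cover agent $u \in U$ may envy an $I$-neighbour, so the envy status of $U$ depends on the assignment to $I$ as well. The first step is therefore to \emph{guess}, for every $u \in U$, both the house $\phi(u)$ (at most $m$ choices) and a binary flag $e_u \in \{0,1\}$ recording whether $u$ is envious in the sought allocation (2 choices); this yields the claimed $(2m)^k$ guesses.

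For a fixed guess I would first run consistency checks: the houses $\{\phi(u)\}_{u \in U}$ must be pairwise distinct; if $\phi(u) \in \PP_u$ then $u$ is happy and we require $e_u = 0$; and if $u$ is unhappy while already having a $U$-neighbour $u'$ with $\phi(u') \in \PP_u$ then $u$ is forced to be envious and we require $e_u = 1$. Any guess failing these checks is discarded. The role of the flag is to \emph{decouple} the two sides: for each unhappy $u$ with $e_u = 0$, I forbid every $I$-neighbour of $u$ from receiving a house in $\PP_u$ (otherwise $u$ would become envious), turning the otherwise entangled constraint into a hard list of forbidden houses for the relevant $I$-agents; guesses with $e_u = 1$ impose no such constraint and simply contribute $1$ to the envy count.

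After the flags are fixed, the remaining task is to assign the free houses $\HH \setminus \phi(U)$ to the agents of $I$. Each $a \in I$ is \emph{at risk} (that is, envious unless it gets a preferred house) exactly when some $U$-neighbour already received a house in $\PP_a$, which is determined by the guess; a non-at-risk $I$-agent is never envious. Hence I reduce this to a bipartite assignment between $I$ and the free houses, where $a$ may take $h$ iff $h$ is not forbidden for $a$, and the edge carries weight $1$ when $a$ is at risk and $h \in \PP_a$ (else weight $0$). A maximum-weight matching saturating $I$ (computable in polynomial time) maximizes the number of at-risk agents that escape envy, so the envy attributed to this guess is $\sum_{u \in U} e_u + (\text{number of at-risk } I\text{-agents} - \text{matching weight})$, and the algorithm returns the minimum of this quantity over all feasible guesses. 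For correctness I would argue two inequalities: extending any feasible guess by its optimal matching produces a genuine allocation whose true envy is at most the counted value (so every returned number is an achievable envy), and conversely an optimal allocation $\phi^\star$ induces the consistent guess $(\phi^\star|_U, (e_u = [u \text{ envious}])_{u})$, under which $\phi^\star|_I$ is a feasible matching whose weight lower-bounds the matching optimum, forcing the counted value to be at most $\mathrm{envy}(\phi^\star)$.

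The running time is $(2m)^k$ guesses times the $(n+m)^{\OO(1)}$ cost of the consistency checks together with one max-weight bipartite matching, giving $\OO((2m)^k \cdot (n+m)^{\OO(1)})$. The main obstacle is precisely the coupling between the envy of vertex-cover agents (which is sensitive to the independent-set assignment) and the desire of independent-set agents for preferred houses; guessing the envy flags is what breaks this coupling and exposes a clean weighted-matching subproblem. The same scheme extends to \ohaah by, among all minimum-envy guesses, additionally maximizing the number of happy agents, which can be folded into the matching objective exactly as in the $d = 1$ case.
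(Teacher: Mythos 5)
Your proposal is correct and follows essentially the same route as the paper: guess the restriction of $\phi$ to the vertex cover ($m^k$ choices) together with the envy status of each cover agent ($2^k$ choices, which in the paper takes the form of guessing the subset $C$ of cover agents that remain non-envious), run consistency checks against the forced-envious set, and then solve a polynomial-time bipartite matching problem over the independent set and the free houses. The only differences are cosmetic — you use hard forbidden edges plus a maximum-weight matching, whereas the paper encodes the same constraints as an $\infty$/$1$/$0$-weighted minimum-cost maximum matching — and your correctness argument (over-counted guesses never beat the optimum, while the guess induced by an optimal allocation attains it) matches the paper's.
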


\begin{proof}
    Given a set of $n$ agents $\AA$, $\GG_\AA = (\AA, E_\AA)$, set of $m$ houses $\HH$, preference $(\PP_a)_{a \in \AA}$, assume that a vertex cover $S \subseteq \AA$ of $\GG_\AA$ is given, such that $|S| = k$ (if not, this can be computed trivially in $n^{k+\OO(1)}$ time). Consider the following algorithm.

    \begin{itemize}
        \item Guess $\phi(a)$, for all $a \in S$. (there are $\OO(m^k)$ such guesses).
        \item Compute $J \subseteq S$, the set of agents already envious due to $\phi(S)$. Precisely, 
            $$J = \{a \in S \mid \phi(a) \notin \PP_a \text{ and }\exists a' \in S \cap N_{\GG_\AA}(a), \phi(a') \in \PP_a\}$$
        \item Guess happy agents $C \subseteq S \setminus J$ to be the set of agents who will remain non-envious after assigning houses to agents in $\AA \setminus S$ (there are $\OO(2^k)$ such guesses).
        \item We define a weighted, complete, bipartite graph $\GG_\FF((\AA \setminus S) \cup (\HH \setminus \phi(S)),E_\FF,w)$ as follows.
        \begin{itemize}
            \item $(\AA \setminus S)$ and $(\HH \setminus \phi(S))$ are the two parts. $E_\FF = \{\{a,h\} \mid a \in (\AA \setminus S), h \in (\HH \setminus \phi(S))\}$.
            \item Set the weights $w(a, h)$ of the edge $\{a,h\}$ to model whether $a$ becomes envious on getting $h$ or not. This can be achieved by doing the following:
            \begin{itemize}
                \item $w(a, h)$ is set to $\infty$, when the assignment of $h \in \HH \setminus \phi(S)$ to $a \in \AA \setminus S$ is not consistent with the guess of happy agents $C$. This happens when there is some agent $a' \in C$, such that $a'$ is a neighbour of $a$ but $a'$ is envious of $a$. Formally, 
                $w(a, h) = \infty$ when there exists $a' \in (C \cap N_{\GG_\AA}(a))$ such that $ \phi(a') \notin \PP_{a'}$ but $h \in \PP_{a'}$.
                \item $w(a, h)$ is set to $1$, when the assignment of $h$ to $a$ causes $a$ to become envious. Note that since we have already considered the allocation to be consistent with the guess of $C$, no agent in $C$ would be envious of $a$ getting $h$. This happens when there is some agent $a' \in S$, such that $a'$ is a neighbour of $a$ gets a house preferred by $a$ but $h$ is not preferred by $a$. Formally, 
                $w(a, h) = 1$ when there exists $a' \in (S \cap N_{\GG_\AA}(a))$ such that $ \phi(a') \in \PP_a$ but $h \in \PP_a$.
                \item When none of these is the case, the assignment of $h$ to $a$ does not increase envy. We set $w(a, h) = 0$.
            \end{itemize}
        \end{itemize}
        \item Compute the minimum cost maximum matching (left exhausting) of $\GG_\FF$ and obtain the cost to be $\zeta$. This equates to the total number of envious agents in corresponding assignment of the agents in $\AA \setminus S$.
        \item For some guess of $C$, the total number of envious agents in $S$ is at most $(k - |C|)$, while the total number of envious agents in $\AA \setminus S$ is $\zeta$. Since we are iterating over all guesses of $C$, for any minimum-envy allocation, there would exist a guess $C$ such that $(k - |C|)$ is exactly the number of envious agents. Hence, to solve \ohaa, it suffices to output the minimum of $k - |C| + \zeta$ over all guesses.
    \end{itemize}

    This algorithm takes time $\OO((2m)^k \cdot (n + m)^{\OO(1)})$ and essentially enumerates all possible allocations of $S$ while finding a corresponding minimum-envy extension of that. 
\end{proof}

We refer to this as Algorithm~\ref{alg:vc-xp} and provide its pseudocode below.   
    
\begin{algorithm}[htbp]
    \caption{ \hfill \textbf{Input:} $\AA, \HH, \GG_\AA(\AA,E_\AA), (\PP_a)_{a \in \AA}$, $S$ \hfill \textbf{Output:} Minimum envy}\label{alg:vc-xp}
    \begin{algorithmic}[1]
        \State{$\xi \gets n$} \Comment{In the worst case, all agents are envious}
        \For{guess $(\phi(a))_{a \in S}$}
            \State{$J \gets \{a \in S \mid \phi(a) \notin \PP_a \text{ and }\exists a' \in S \cap N_{\GG_\AA}(a), \phi(a') \in \PP_a\}$} \Comment{already envious}
            \For{guess $C \subseteq S \setminus J$} \Comment{envious in $S$}
                \For{($a \in (\AA \setminus S)$, $h \in (\HH \setminus \phi(S)$))}
                    \If{there exists $a' \in (C \cap N_{\GG_\AA}(a))$ such that $ \phi(a') \notin \PP_{a'}$ but $h \in \PP_{a'}$}
                        \State{$w(a, h) \gets \infty$}
                    \ElsIf{there exists $a' \in (S \cap N_{\GG_\AA}(a))$ such that $ \phi(a') \in \PP_a$ but $h \in \PP_a$}
                        \State{$w(a, h) \gets 1$}
                    \Else
                        \State{$w(a, h) \gets 0$}
                    \EndIf
                \EndFor
                \State{$\GG_\FF \gets$ complete bipartite graph with parts $(\AA \setminus S)$ and $(\HH \setminus \phi(S))$ and weights $w$}
                \State{$\zeta \gets$ minimum cost maximum bipartite matching of $\GG_\FF$}
                \State{$\xi \gets \min(\xi, k - |C| + \zeta)$}
            \EndFor
        \EndFor 
        \State{\Return $\xi$}
    \end{algorithmic}
\end{algorithm}

In Algorithm~\ref{alg:vc-xp}, if we modify the weight function as 
$$w'(a,h) = \begin{cases}
    w(a,h) - \frac{1}{n+1}  & \text{if } h \notin P_a\\
    w(a,h) & \text{otherwise}
\end{cases}$$
we would find a minimum envy allotment that maximizes the number of agents getting a preferred house, solving \ohaah.

\begin{remark}
    Algorithm~\ref{alg:vc-xp} could be modified slightly to solve \ohaah with the same asymptotic running time.
\end{remark}

This immediately gives us a polynomial time algorithms for star graphs, as they have vertex cover of size $1$. Moreover this translates to a quasi-polynomial time algorithm for graphs having vertex covers of size $(\log n)^{\OO(1)}$; hence we do not expect the problem to be \NPH for such graphs.

This motivates us to explore the exact boundary of tractable and intractable cases for various sizes of the minimum vertex cover. A natural question would be to ask whether vertex covers of size at most $n^\varepsilon$ for any constant $\varepsilon \in (0,1)$ would also allow efficient algorithms. We show in the next couple of theorems, that this is indeed not the case; \ohaa in \NPH even for bipartite graphs and split graphs with a vertex cover size of at most $n^\varepsilon$ for every constant $\varepsilon \in (0,1)$. Inspired from the reductions due to Madathil et. al.~\cite{madathil2024cost}, the reductions we provide work even when each house is preferred by constantly many agents.
\shortversion{
\begin{theorem}[$\star$]\label{thm:vc-bip}
    Let $\varepsilon \in (0, 1)$ be any constant. Consider an input $(\AA, \HH, \GG_\AA(\AA, E_\AA), (\PP_a)_{a \in \AA})$ with $m$ houses and $n$ agents. \ohaa is \NP-hard even when $\GG_\AA$ is a complete bipartite graph with one of its partition being of size at most $n^\varepsilon$, and each house is preferred by at most four agents.
\end{theorem}
}
\longversion{
\begin{theorem}\label{thm:vc-bip}
    Let $\varepsilon \in (0, 1)$ be any constant. Consider an input $(\AA, \HH, \GG_\AA(\AA, E_\AA), (\PP_a)_{a \in \AA})$ with $m$ houses and $n$ agents. \ohaa is \NP-hard even when $\GG_\AA$ is a complete bipartite graph with one of its partition being of size at most $n^\varepsilon$, and each house is preferred by at most four agents.
\end{theorem}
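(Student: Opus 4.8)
The plan is to reduce from \tsat, a bounded‑occurrence variant of $3$‑SAT in which every variable occurs in only a constant number of clauses. This bounded‑occurrence promise is exactly what will cap the popularity of every house at four, in the spirit of the constant‑popularity reductions of Madathil et al.~\cite{madathil2024cost}. Given a formula with variables $x_1,\dots,x_p$ and clauses $C_1,\dots,C_q$, I would introduce for each variable $x_a$ two \emph{assignment houses} $t_a,f_a$ and a \emph{variable-agent} preferring $\{t_a,f_a\}$, so that its choice of house encodes the truth value of $x_a$. For each clause $C_j$ and each of its three literals I would introduce a private \emph{satisfaction house} $s_j^i$ and a \emph{guard-agent} preferring $s_j^i$ together with the assignment house that becomes free precisely when that literal is true (namely $f_a$ for a positive occurrence of $x_a$, and $t_a$ for a negative one). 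Finally each clause gets a \emph{clause-agent} preferring its three satisfaction houses $\{s_j^1,s_j^2,s_j^3\}$. The intended behaviour is that a guard grabs the freed assignment house exactly when its literal is true, thereby releasing $s_j^i$ to the clause-agent; hence a clause-agent can be made happy iff some literal is true, and is otherwise forced to be unhappy while an adjacent guard sits on a satisfaction house it wants, making it envious. Thus every unsatisfied clause contributes a forced envious agent, and the minimum envy meets a fixed baseline iff the formula is satisfiable.

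To realise this inside a complete bipartite graph with a small side, I would take $\GG_\AA=K_{A,B}$ with the variable-agents and clause-agents on one side $A$ (the ``core'', of size $O(p+q)=O(p)$) and all guard-agents on the other side $B$; note that a clause-agent must be opposite its guards so that it can indeed envy one of them. Popularity is then immediate: each assignment house $t_a$ (resp.\ $f_a$) is preferred by its variable-agent plus the boundedly many guards arising from the opposite-polarity occurrences of $x_a$, while each satisfaction house $s_j^i$ is preferred by exactly its guard and its clause-agent; with the occurrence bound this gives at most four agents per house, as required. The small-vertex-cover guarantee I would obtain by padding: I add $\Theta(|A|^{1/\eps})$ dummy agents with empty preference lists to side $B$, together with the same number of fresh dummy houses. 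A dummy agent is never envious (empty preference list) and never induces envy in anyone (it only ever receives a dummy house, preferred by no one), so the optimal envy is unchanged; the graph stays complete bipartite; and no existing house changes its popularity. After padding, $A$ remains the smaller side with $|A|\le n^\eps$, so $A$ is a vertex cover of size at most $n^\eps$, giving exactly the claimed restriction.

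Correctness would follow in the usual two directions together with a normalisation step. For the forward direction, a satisfying assignment yields the allocation in which each variable-agent takes the house matching its value, each guard takes the freed assignment house when its literal is true and its own $s_j^i$ otherwise, and each clause-agent grabs a satisfaction house released by a true literal; one checks that only the fixed baseline set of agents is envious. For the reverse direction I would first prove a ``nice allocation'' normalisation---analogous to the one in the proof of Theorem~\ref{thm:bip-d-2}---that lets me assume, without increasing envy, that every variable-agent holds one of $\{t_a,f_a\}$ and every guard behaves canonically; reading off the truth values of the variable-agents then recovers an assignment under which every clause-agent is non-envious, i.e.\ every clause is satisfied.

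The main obstacle is that a complete bipartite graph offers no way to \emph{isolate} gadgets: every agent of $A$ is adjacent to every agent of $B$, so the variable-, guard-, and clause-gadgets all see one another and can produce unintended cross-gadget envy. The delicate points are to arrange preferences so that guards are always happy in the canonical allocation (hence never envious, and never a cause of envy except through the single house their own clause-agent wants), and so that variable-agents, although adjacent to every agent on the opposite side, are never spuriously envied. Compounding this, a single variable must feed several occurrences at once, and the naive ``one free assignment house'' is then contended for by several guards simultaneously; I expect the hardest step to be resolving this contention---e.g.\ by giving each occurrence enough private houses that exactly one guard is served per occurrence while consistency across the occurrences of a variable is still forced---all while keeping every house inside the popularity-four budget.
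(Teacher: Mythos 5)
Your proposal takes a genuinely different route from the paper (which reduces from \CLIQUE: vertex agents on the large side, two edge agents per edge on the small side, one house $h_e$ per edge preferred by the four agents $a^1_e, a^2_e, a_u, a_v$, dummy houses, and a counting argument showing that $\binom{k}{2}$ allocated edge-houses yield $\ge k$ envious vertex agents with equality exactly for cliques). However, your reduction has a genuine gap at its core --- the contention problem you yourself flag --- and it is not a technicality but a failure of the forward direction as stated. In \tsat each literal occurs twice, so when $x_a$ is true there are two guards whose only way to be happy \emph{and} to vacate their satisfaction house is to take the single freed house $f_a$; only one can. Concretely, if $C_1=(x\vee y\vee z)$ and $C_2=(x\vee y'\vee z')$ are both satisfied only by $x$ under a given satisfying assignment, then both clause agents must take their $x$-occurrence satisfaction houses, both corresponding guards must vacate them, and only one of the two guards can land on $f_x$; the other ends up on a dummy house and is envious of the clause agent holding its satisfaction house (they are adjacent, since guards sit on $B$ and clause agents on $A$). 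So a satisfying assignment does not in general achieve your baseline envy; what you actually need is a satisfying assignment admitting a system of distinct representatives of true literal occurrences, which need not exist and which you do not construct. The fix you gesture at (per-occurrence private copies of the freed house) breaks soundness instead: one variable agent cannot block two copies of $f_a$ when $x_a$ is false, and splitting the variable into two agents requires enforcing consistency between them --- exactly the gadget-isolation problem that, as you note, a complete bipartite graph does not let you solve by adjacency.

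Two smaller issues. First, your claim that a padding dummy agent ``never induces envy'' holds only in the intended allocation; nothing stops a minimum-envy allocation from putting a satisfaction house on a dummy agent of side $B$, making an unhappy clause agent envious of it, so the reverse direction needs a normalisation (``nice allocation'') exchange argument covering the dummies as well --- the paper runs such an argument, and it is where most of its work lies. Second, your baseline is left unspecified; as described it would be envy $0$, and the contention example shows satisfiable instances can then fail to reach it. The structural lesson from the paper's proof is that \CLIQUE-type budget/counting reductions mesh with graphical envy minimization precisely because no consistency between gadgets ever needs to be enforced: the objective itself (number of vertices covered by the chosen edge-houses) is the quantity that separates yes- from no-instances, whereas SAT-type reductions need coordination that the complete bipartite agent graph cannot provide.
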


\begin{proof}
    We reduce from the \NP-hard problem of \CLIQUE. Let $(G(V,E), k)$ be an instance of \CLIQUE, where the problem asks to decide if $G$ has a clique of size $k$. Let $|V| = N$ and $|E| = M$. Moreover, we can assume $N \ge (2M)^{\lceil 1 / \varepsilon \rceil}$; indeed if that is not the case, we can add (in polynomial time), $(2M) ^ {\lceil 1 / \varepsilon \rceil}$ many isolated vertices to $G$ without changing the size of the maximum clique.
    
    We reduce $(G(V,E), k)$ to an instance of \ohaa as follows. 

    \begin{itemize}
        \item Let $\AA = \{a_v \mid v \in V\} \cup \{a^1_e \mid e \in E\} \cup \{a^2_e \mid e \in E\}$. That is, we have an agent $a_v$ for every vertex $v$ of $G$, and two agents $a^1_e$, and $a^2_e$ for each edge $e$ of $G$. Therefore, $n = |\AA| = N + 2M$. 
        \item Let $\HH = \{h_e \mid e \in E\} \cup \{h^*_j \mid j \in [N+2M-\binom{k}{2}]\}$. Therefore, $|\HH| = M + N + 2M - \binom{k}{2} = |\AA| + (M-\binom{k}{2})$. 
        \item We are now going to set the houses $h_e$ to be preferred by agents $a^1_e$ and $a^2_e$, and also by $a_u$ and $a_v$ where $e = \{u,v\}$. This is effectively done by setting $\PP_{a^1_e} = \PP_{a^2_e} = \{h_e\}$, and $\PP_{a_v} = \{h_{e'} \mid e' \in E \text{ and } v \in e'\}$. The houses $\{h^*_j \mid j \in [N+2M - \binom{k}{2}]\}$ are preferred by no agents, i.e. these are dummy houses.
        \item We now define the underlying graph $\GG_\AA(\AA, E_\AA)$. Let $\AA_V = \{a_v \mid v \in V \}$ and let $\AA_E = \{a^1_e \mid e \in E\} \cup \{a^2_e \mid e \in E\}$. Then $\GG_\AA$ defines the complete bipartite graph between $\AA_V$ and $\AA_E$. That is, $E_\AA = \{\{a,a'\} \mid a \in \AA_V, a' \in \AA_E\}$. 
        \item The target number of envious agents is $k$.
    \end{itemize}
    Notice that the part $\AA_E$ of the complete bipartite graph $\GG_\AA$ satisfies $|\AA_E| = 2M \le N^\varepsilon = |\AA_V|^\varepsilon \le |\AA|^\varepsilon$. This reduction can be done in polynomial time. We now show the correctness of the reduction.

    \textbf{Forward direction.} 
    Let $(G(V, E), k)$ be a Yes-instance of \CLIQUE, i.e., there is clique $S \subseteq V$ of size $k$. Let $E_S$ be the edges of the clique $S$, $|E_S| = \binom{k}{2}$. Consider the allocation $\phi$ defined as:
    \begin{itemize}
        \item A dummy house is assigned to $a_v$ for all $v \in V$. This uses $N$ dummy houses.
        \item A dummy house is assigned to $a^2_e$ for all $e \in E$. This uses $M$ dummy houses.
        \item A dummy house is assigned to $a^1_e$ for every $e \in E \setminus E_S$. This uses $(M - \binom{k}{2})$ dummy houses. 
        \item For every $e \in E_S$,the house $h_e$ is assigned to $a^1_e$. Hence the houses $\{h_e \mid e \in E \setminus E_S\}$ are all unallocated.
    \end{itemize}

    Note that for every $e \in E$, the agents $a^1_e$ and $a^2_e$ are all non-envious, because their neighbours in $\GG_\AA$, i.e. the agents $\{a_v \mid v \in V\}$ are all assigned dummy houses. Moreover, for all $v \in V$, such that $v \notin S$, no edge in $E_S$ is incident to $v$; thus all houses preferred by $a_v$ are unallocated, making $a_v$ non-envious. Therefore, all envious agents are in $\{a_v \mid v \in S\}$. This makes the number of envious agents in $\phi$ to be at most $|S| = k$; the \ohaa instance is a Yes-instance.

    \textbf{Reverse direction.}
    Let $\phi$ be some allocation with at most $k$ envious agents. We say that $\phi$ is nice if for all $e \in E$, $h_e$ is either unassigned or is assigned to $a^1_e$. If $\phi$ is not nice to begin with, then $\phi$ can be converted to a nice allocation by a sequence of exchanges, none of which increases total envy. 
    
    Let $h_e$ be some house which is neither unallocated, nor allocated to $a^1_e$. 
    
    \begin{itemize}
        \item \textbf{Case I: $h_e$ is assigned to $a^2_e$.} We swap the houses allocated to $a^1_e$ and $a^2_e$. This does not change the number of envious agents, due to the symmetry of $\GG_\AA$.
        \item \textbf{Case II: $h_e$ is assigned to $a^j_{e'}$ for some $j \in \{1,2\}$, $e' \ne e$.} We swap the houses allocated to $a^1_e$ and $a^j_{e'}$. Again, this does not increase the number of envious agents. This is because
        \begin{enumerate}
            \item $a^1_e$ is not envious in both allocations.
            \item $a^j_{e'}$ did not have their preferred house before the swap. Since the neighbourhood of $a^j_{e'}$ is unaffected by the swap, $a^j_{e'}$ cannot become envious after the swap, if $a^j_{e'}$ was not envious before the swap.
            \item For all other agents, they are envious after the swap if and only if they were envious before the swap. This is because the set of houses allocated to their neighbours remain unchanged.
        \end{enumerate}
        \item \textbf{Case III: $h_e$ is assigned to $a_v$ for some $v \in V$.} Agents $a^1_e, a^2_e$ must be envious of $a_v$ in $\phi$. We now look at the sequence of agents $a^{1}_{e_0} = a^1_e, a^{1}_{e_1}, \ldots, a^{1}_{e_{p-1}}, a^{1}_{e_p}$ such that 
        \begin{itemize}
            \item $\phi(a_v) = h_{e_0}$.
            \item $\phi(a^{1}_{e_0}) = h_{e_1}, \phi(a^{1}_{e_1}) = h_{e_2}, \ldots, \phi(a^{1}_{e_{p-1}}) = h_{e_p}$.
            \item $\phi(a^{1}_{e_p}) = h^*_q$ is a dummy house.
        \end{itemize}

        Note that such a sequence must exist and end with an agent who is assigned a dummy house, as the number of agents and houses are finite. Therefore, $p$ is a finite integer. Now, we modify the allocation to $\phi'$ as follows:

        \begin{itemize}
            \item $\phi'(a_v) = h^*_q$.
            \item $\phi'(a^{1}_{e_0}) = h_{e_0}, \phi'(a^{1}_{e_1}) = h_{e_1}, \ldots, \phi'(a^{1}_{e_{p-1}}) = h_{e_{p-1}}, \phi'(a^{1}_{e_p}) = h_{e_p}$.
            \item $\phi'(a) = \phi(a)$ for all other agents $a$.
        \end{itemize}

        We will now argue why this set of exchanges do not increase the number of envious agents.
        \begin{itemize}
            \item \textsl{Case A: $e$ is incident on $v$.} Let $e = \{u,v\}$. In $\phi$, $a_v$ was not envious, but $a_u$ may or may not be envious. The agents $a^1_e$, and $a^2_e$ were both envious. In $\phi'$, $a^1_e$ and $a^2_e$ are not envious, agent $a_v$ becomes envious, and $a_u$ may become envious. Therefore, the number of envious agents in $\{a^1_e, a^2_e, a_u, a_v\}$ do not increase. The agents $a^1_{e_1}, a^1_{e_2}, \ldots, a^1_{e_p}$ all become non-envious in $\phi'$ as they get their preferred houses. For all other agents $a^j_{e'}$ where $e'$ is an edge in $E$, the dummy house $h^*_q$ is the only house which appears in the set of houses assigned to neighbours of $a^j_{e'}$ in $\phi'$ which did not appear in $\phi$; hence $a^j_{e'}$ is envious in $\phi'$ only if they were envious in $\phi$. Finally, all other agents $a_{v'}$ where $v'$ is a vertex in $V$, are envious in $\phi'$ if and only if they are envious in $\phi$ as only the house $h_v$ (a house not preferred by $a_{v'}$) gets added to the set of houses assigned to the neighbours of $a_{v'}$. 
            \item \textsl{Case B: $e$ is not incident on $v$.} Let $e=\{u_1, u_2\}$. In $\phi$, the agents $a_{u_1}$, $a_{u_2}$ may or may not be envious; but the agents $a^1_e$ and $a^2_e$ were both envious. In $\phi'$, $a^1_e$ and $a^2_e$ are not envious, while agents $a_{u_1}$ and $a_{u_2}$ may become envious. Therefore, the number of envious agents in $\{a_{u_1}, a_{u_2}, a^1_e, a^2_e\}$ do not increase. The agents $a^1_{e_1}, a^1_{e_2}, \ldots, a^1_{e_p}$ all become non-envious in $\phi'$ as they get their preferred houses. If the agent $a_v$ is not envious in $\phi$, then it cannot become envious in $\phi'$ as only $h_{e}$ (a house not preferred by $a_v$) gets added to the set of houses assigned to its neighbours. For all other agents $a^j_{e'}$ where $e'$ is an edge in $E$, the dummy house $h^*_q$ is the only house which appears in the set of houses assigned to neighbours of $a^j_{e'}$ in $\phi'$ which did not appear in $\phi$; hence $a^j_{e'}$ is envious in $\phi'$ only if they were envious in $\phi$. Finally, all other agents $a_{v'}$ where $v'$ is a vertex in $V$, are envious in $\phi'$ if and only if they were envious in $\phi$ as only the house $h_v$ (a house not preferred by $a_{v'}$) gets added to the set of houses assigned to the neighbours of $a_{v'}$. 
        \end{itemize}
    \end{itemize}
    These exchanges do not increase the total envy, but they allocate at least one house $h_e$ to $a^1_e$, which was previously allocated to some other agent. Repeating this for at most $|\HH|$ times would give us a nice allocation, with envy no more than that in $\phi$.

    Hence, we can safely assume that $\phi$ is a nice allocation with envy at most $k$. $\phi$ allocates dummy houses to $a_v$ for all $v \in V$ and to $a^2_e$ for all $e \in E$. Moreover, it is safe to assume that all dummy houses are assigned to some agent, otherwise we can assign a dummy house to any $a^1_e$ without increasing total envy.

    We now look into agents which are not assigned dummy vertices by $\phi$. There are $N + 2M - (N + 2M - \binom{k}{2}) = \binom{k}{2}$ many such agents, each of the form $a^1_e$ for some $e \in E$. Let $\gamma = \binom{k}{2}$. Let $a^{1}_{e_1}, a^{1}_{e_2}, \ldots, a^{1}_{e_\gamma}$ be the agents which are not assigned dummy houses. For any $v \in V$, $v$ is an endpoint of at least one edge in $e_1, e_2, \ldots, e_\gamma$, if and only if the agent $a_v$ is envious. Therefore, the number of envious agents is precisely the number of vertices on which the edges $e_1, e_2, \ldots, e_\gamma$ are incident on. The number of vertices that $\binom{k}{2}$ distinct edges are incident on, is at least $k$, with equality holding if and only if the edges induce a clique of size $k$. Further, the number of envious agents, i.e. the number of vertices on which the edges $e_1, e_2, \ldots, e_\gamma$ are incident on, is at most $k$ by assumption. Therefore indeed, $\{e_1, e_2, \ldots, e_\gamma\}$ is a set of $\binom{k}{2}$ distinct edges, which induce a clique of size $k$. 

    This completes the proof of correctness of the reduction.
\end{proof}
}







Next we explore split graphs. Split graphs are graphs that can be partitioned into a clique and an independent set. Note that the clique can act as a vertex cover. We show the the problem \ohaa, even when restricted to split graphs of $n$-vertices having the clique of size at most $n^\varepsilon$, for any constant $\varepsilon \in (0,1)$, is \NPH. 

\begin{theorem}\label{thm:vc-split}
    Let $\varepsilon \in (0, 1)$ be any constant. Consider an input $(\AA, \HH, \GG_\AA(\AA, E_\AA), (\PP_a)_{a \in \AA})$ with $m$ houses and $n$ agents. \ohaa is \NP-hard even when $\GG_\AA$ is a split graph where the maximum clique is of size at most $n^\varepsilon$, and each house is preferred by at most three agents.
\end{theorem}

\begin{proof}
    We provide a reduction from the \CLIQUE~problem to \ohaa. Let $(G(V,E), k)$ be an instance of \CLIQUE, which asks to decide whether $G$ has a clique of size $k$. 

    Let $|V| = N, |E| = M \ge 1$. Define $t = N ^{\lceil 1 / \varepsilon \rceil}$. We now construct an instance of \ohaa as follows:
    
    \begin{itemize}
        \item Let $\AA = \{a_v \mid v \in V\} \cup \{a^j_e \mid e \in E, j \in [t]\}$. That is, we have an agent $a_v$ for every vertex $v$ of $G$, and $t$ agents $a^1_e, a^2_e, \ldots, a^t_e$ for each edge $e$ of $G$. Therefore, $n = |\AA| = N + M \cdot t = N + M \cdot N ^{\lceil 1/\varepsilon \rceil}$. 
        \item Let $\HH = \{h^j_e \mid e \in E, j \in [t]\} \cup \{h^*_j \mid j \in [(M - \binom{k}{2})t + N]\}$. Therefore, $|\HH| = M \cdot t + N + (M - \binom{k}{2})t = |\AA| + (M - \binom{k}{2})t$. 
        \item We are now going to set the houses $h^j_e$ to be preferred by agents $a^j_e$, $a_{u}, a_{v}$, where $e = \{u, v\}$. This is effectively done by setting $\PP_{a^j_e} = \{h^j_e\}$, and $\PP_{a_v} = \{h^j_e \mid j \in [t], e \in E \text{ such that } v \in e\}$. The houses $\{h^*_j \mid j \in [(M - \binom{k}{2})t + N]\}$ are preferred by no agents, i.e., these are dummy houses.
        \item We now define the underlying graph $\GG_\AA(\AA, E_\AA)$. The edges in $E_\AA$ are defined as follows:
        \begin{itemize}
            \item For every $u, v \in V$, $\{a_u, a_v\} \in E_\AA$.
            \item For every $v \in V, e \in E, j \in [t]$, $\{a_{v}, a^j_e\} \in E_\AA$.
        \end{itemize}
        \item The target number of envious agents is $k$.
    \end{itemize}

    This means that $C = \{a_v \mid v \in V\}$ forms a clique and $I = \{a^j_e \mid e \in E, j \in [t]\}$ forms an independent set with $\AA = C \cup I$. Moreover, $|C| = N < (N + M\cdot N^{\lceil 1 / \varepsilon \rceil})^\varepsilon = n^\varepsilon$. The output of the reduction is the \ohaa instance $(\AA, \HH, \GG_\AA, (\PP_a)_{a \in \AA}, k)$. This reduction can be done in polynomial time as $\varepsilon$ is a constant. We now show the correctness of this reduction.

    \textbf{Forward direction.}
    Let $(G(V, E), k)$ be a Yes-instance of \CLIQUE, i.e., there is a clique $S \subseteq V$ of size $k$. Let $E_S$ be the edges of the clique $S$. We have $|E_S| = \binom{k}{2}$. Consider the allocation $\phi$ defined as:
    \begin{itemize}
        \item A dummy house is assigned to $a_v$ for all $v \in V$. This uses $N$ dummy houses.
        \item A dummy house is assigned to $a^j_e$ for every $j \in [t], e \in E \setminus E_S$. This uses $(M - \binom{k}{2})t$ dummy houses. 
        \item For every $j \in [t], e \in E_S$, the house $h^j_e$ is assigned to $a^j_e$. Hence the houses $\{h^j_e \mid j \in [t], e \in E \setminus E_S\}$ are all unallocated.
    \end{itemize}

    Note that for every $e \in E$, the agents $a^j_e$ are all non-envious, because their neighbours in $\GG_\AA$, i.e., the agents $\{a_v \mid v \in V\}$ are all assigned dummy houses. Moreover, for all $v \in V$, such that $v \notin S$, no edge in $E_S$ is incident to $v$; thus, all houses preferred by $a_v$ are unallocated, making $a_v$ non-envious. Therefore, all envious agents are in $\{a_v \mid v \in S\}$. This makes the number of envious agents in $\phi$ to be at most $|S| = k$; the \ohaa instance is a Yes-instance. 

    \textbf{Reverse direction.}
    Let $\phi$ be some allocation with at most $k$ envious agents. We say that $\phi$ is nice if for all $j \in [t], e \in E$, $h^j_e$ is either unassigned or is assigned to $a^j_e$. If $\phi$ is not nice to begin with, then $\phi$ can be converted to a nice allocation by a sequence of exchanges, none of which increases total envy. 
    
    Let $h^j_e$ be some house which is neither unallocated, nor allocated to $a^j_e$. 
    
    \begin{itemize}
        \item \textbf{Case I: $h^j_e$ is allocated to $a^{j'}_{e'}$ for some $j' \ne j$ or $e' \ne e$.} Then we swap houses allocated to $a^j_e$ and $a^{j'}_{e'}$. After this swap, $a^j_e$ is non-envious by getting a preferred house. The agent $a^{j'}_{e'}$ cannot become envious after the swap if $a^{j'}_{e'}$ was not envious before the swap; this is because the houses allocated to the neighbours of $a^{j'}_{e'}$ remain unchanged and $a^{j'}_{e'}$ did not have a preferred house before the swap. For all other agents, they remain envious after the swap if and only if they were envious before the swap (as the set of houses allocated to their neighbours is unchanged).
        \item \textbf{Case II: $h^j_e$ is allocated to $a_v$ for some $v \in V$.} $a^j_e$ must be envious of $a_v$ in $\phi$. We now look at the sequence of agents $a^j_e = a^{j_0}_{e_0}, a^{j_1}_{e_1}, \ldots, a^{j_{p-1}}_{e_{p-1}}, a^{j_p}_{e_p}$ such that 
        \begin{itemize}
            \item $\phi(a_v) = h^{j_0}_{e_0}$.
            \item $\phi(a^{j_0}_{e_0}) = h^{j_1}_{e_1}, \phi(a^{j_1}_{e_1}) = h^{j_2}_{e_2}, \ldots, \phi(a^{j_{p-1}}_{e_{p-1}}) = h^{j_p}_{e_p}$.
            \item $\phi(a^{j_p}_{e_p}) = h^*_q$ is a dummy house.
        \end{itemize}
        Note that such a sequence must exist and end with an agent who is assigned a dummy house, as the number of agents and houses are finite. Therefore, $p$ is a finite integer. Now, we modify the allocation to $\phi'$ as follows:

        \begin{itemize}
            \item $\phi'(a_v) = h^*_q$.
            \item $\phi'(a^{j_0}_{e_0}) = h^{j_0}_{e_0}, \phi'(a^{j_1}_{e_1}) = h^{j_1}_{e_1}, \ldots, \phi'(a^{j_{p-1}}_{e_{p-1}}) = h^{j_{p-1}}_{e_{p-1}}, \phi'(a^{j_p}_{e_p}) = h^{j_p}_{e_p}$.
            \item $\phi'(a) = \phi(a)$ for all other agents $a$.
        \end{itemize}

        In $\phi'$, $a^{j}_e$ is not envious anymore, however, $a_v$ may or may not become envious. The agents $a^{j_1}_{e_1}, \ldots, a^{j_{p-1}}_{e_{p-1}}, a^{j_p}_{e_p}$, were allocated their preferred house in $\phi'$, and hence are not envious. Moreover, for all other $a^{j'}_{e'}$, they are envious in $\phi'$ if and only if they were envious in $\phi$; this is because the dummy house $h^*_q$ is the only house that appears in the set of houses allocated to their neighbours in $\phi'$ but not in $\phi$. The same happens for $a_{v'}$, $v' \in V \setminus \{v\}$; the set of houses allocated to the neighbours of $a_{v'}$ does not change. Therefore this does not increase total envy.
    \end{itemize}

    These exchanges do not increase the total envy, but they allocate at least one house $h^j_e$ to $a^j_e$, which was previously allocated to some other agent. Repeating this for at most $|\HH|$ times would give us a nice allocation, with envy no more than that in $\phi$.

    Hence, we can safely assume that $\phi$ is a nice allocation with envy at most $k$. $\phi$ allocates dummy houses to $a_v$ for all $v \in V$. Moreover, it is safe to assume that all dummy houses are assigned to some agent, otherwise we can assign a dummy house to any $a^j_e$ without increasing total envy.

    We now look into agents which are not assigned dummy vertices by $\phi$. There are $Mt - (M - \binom{k}{2})t = \binom{k}{2}t$ many such agents. Let $\gamma = \binom{k}{2}t$. Let $a^{j_1}_{e_1}, a^{j_2}_{e_2}, \ldots, a^{j_\gamma}_{e_\gamma}$ be the agents which are not assigned dummy houses. Therefore, the number of distinct edges in $e_1, e_2, \ldots, e_\gamma$ is at least $\gamma / t = \binom{k}{2}$. Moreover, for any $v \in V$, $v$ is an endpoint of at least one edge in $e_1, e_2, \ldots, e_\gamma$, if and only if the agent $a_v$ is envious. Therefore, the number of envious agents is precisely the number of vertices on which the edges $e_1, e_2, \ldots, e_\gamma$ are incident on. The number of vertices that $\binom{k}{2}$ distinct edges are incident on, is at least $k$, with equality holding if and only if the edges induce a clique of size $k$. However, the number of envious agents, i.e., the number of vertices on which the edges $e_1, e_2, \ldots, e_\gamma$ are incident on, is at most $k$ by assumption. Therefore, $\{e_1, e_2, \ldots, e_\gamma\}$ is a set of $\binom{k}{2}$ distinct edges, which induce a clique of size $k$. 

    This completes the proof of correctness of the reduction.
\end{proof}

\section{Conclusion}\label{sec:concl}

We study the \ohaa and \ohaah problem, natural extensions of the classical {\sc House Allocation} problem to social networks.

Our results reveal a rich complexity theoretic landscape. We present polynomial-time algorithms when each agent has a unique preferred house. Then, we show NP-hardness results under structural restrictions, including split graphs, complete bipartite graphs, and 3-regular graphs with uniform preferences. We further look into structural properties of the agent graphs and design exact (exponential) algorithms exploiting these structures, that are more efficient than the naive brute-force algorithm. 

An interesting problem for future work would be to analyze the following decision problem: Given a set of agents $\AA$, a set of houses $\HH$, an underlying agent graph $\GG_\AA$, preferences $(\PP_a)_{a \in \AA}$, and integers $k$, $b$, determine if there is an allocation that achieves envy at most $k$ and happiness at least $b$. Note that this is at least as hard as \ohaah.

\newpage
\bibliography{refs}

\newpage
\appendix
\shortversion{

\section{Proof of Theorem~\ref{thm:bip-d-2}}

\begin{figure} [ht]
    \begin{subfigure}[t]{0.95\textwidth} 
        \centering 
        \includegraphics[width=0.3\textwidth]{3-reg.pdf}
        \caption{Instance of CLIQUE for $3$-regular graphs, $k = 3$}
    \end{subfigure} \\
    \begin{subfigure}[t]{0.95\textwidth}
        \centering
        \includegraphics[width=0.9\textwidth]{3-reg-clique-reduced.pdf}
        \caption{The reduced instance as per Theorem~\ref{thm:bip-d-2}}
    \end{subfigure} \hfill \hfill \hfill
    \caption{Reduction for Theorem~\ref{thm:bip-d-2}} \label{fig:bip-d-2}
\end{figure}

\begin{proof}
    We provide a reduction from the \NP-hard problem of \CLIQUE~on regular graphs to \ohaa. Let $(G(V,E), k)$ be an instance of \CLIQUE~on regular graphs where $G$ is $\delta$-regular and the problem asks to decide if $G$ has a clique of size $k$. Let $|V| = N$ and $|E| = M$. 

    We reduce $(G(V,E), k)$ to an instance of \ohaa as follows. 

    \begin{itemize}
        \item Let $\AA = \{a^j_v \mid v \in V, j \in [\delta]\} \cup \{a_e \mid e \in E\}$. That is, we have an agent $a_e$ for every edge $e$ of $G$, and $\delta$ agents $a^1_v, a^2_v, \ldots, a^\delta_v$ for each vertex $v$ of $G$. Therefore, $n = |\AA| = \delta N + M$. 
        \item Let $\HH = \{h_v \mid v \in V\} \cup \{h^*_j \mid j \in [\delta N+M-k]\}$. Therefore, $|\HH| = N + \delta N + M - k = |\AA| + (N-k)$. 
        \item We are now going to set the houses $h_v$ to be preferred by agents $a^j_v$ for all $j \in [\delta]$, and also by $a_e$, for all edges $e$ which are incident on $v$. This is effectively done by setting $\PP_{a^j_v} = \{h_v\}$, and $\PP_{a_e} = \{h_u, h_v\}$, where $e = \{u,v\}$. The houses $\{h^*_j \mid j \in [\delta N + M - k]\}$ are preferred by no agents, i.e., these are dummy houses.
        \item We now define the underlying graph $\GG_\AA(\AA, E_\AA)$. Let $\AA_V = \{a^j_v \mid v \in V, j \in [\delta]\}$ and let $\AA_E = \{a_e \mid e \in E\}$. Then $\GG_\AA$ defines the complete bipartite graph between $\AA_V$ and $\AA_E$. That is, $E_\AA = \{\{a,a'\} \mid a \in \AA_V, a' \in \AA_E\}$. 
        \item The target number of envious agents is $k\delta - \binom{k}{2}$. Not that for non-trivial instances, $k \le \delta$, hence $k\delta - \binom{k}{2}$ is a positive integer.
    \end{itemize}
    
    Please refer to Figure~\ref{fig:bip-d-2} for clarity. Every step can be done in polynomial time; hence this is a polynomial-time reduction. We now show the correctness of the reduction. 

    \textbf{Forward direction.}
    Let $S \subseteq V$ be a clique in $G$ of size $|S| = k$. Consider the allocation $\phi$ defined as follows:
    \begin{itemize}
        \item Assign $h_v$ to $a^1_v$ for all $v \in S$. 
        \item For the remaining $|\AA| - k = \delta N+M-k$ agents, assign them a dummy house each. 
    \end{itemize}
    
    In $\phi$ the houses $h_v$ are left unassigned for $v \notin S$. No agent in $\AA_V$ are envious as all their neighbours (i.e. $\AA_E$) are assigned dummy houses. Moreover, $a_e$ is not envious if $e$ is not incident on $S$; this is because the houses preferred by $a_e$ are unallocated. Therefore, the number of envious agents is at most the number of edges incident on $S$; this is precisely $k\delta - \binom{k}{2}$, as $G$ is $\delta$-regular and $S$ is a clique of size $k$.

    \textbf{Reverse direction.}
    Let $\phi$ be an allocation with at most $k\delta - \binom{k}{2}$ envious agents. We say that an allocation $\phi$ is nice if for all $v \in V$, the house $h_v$ is either unassigned, or assigned to the agent $a^1_v$. If $\phi$ is not nice to begin with, then $\phi$ can be converted to a nice allocation by a sequence of exchanges, none of which increase total envy.

    Let $h_v$ be a house that is assigned to an agent in $\AA$ other than $a^1_v$.

    \begin{itemize}
        \item \textbf{Case I: $h_v$ is assigned to $a^j_v$ for some $j \ne 1$.} We swap the houses allocated to $a^j_v$ and $a^1_v$. This does not change the number of envious agents, due to the symmetry of $\GG_\AA$.
        \item \textbf{Case II: $h_v$ is assigned to $a^j_u$ for some $j \in [\delta]$, $u \ne v$.} We swap the houses allocated to $a^j_u$ and $a^1_v$. Again, this does not increase the number of envious agents. This is because
        \begin{enumerate}
            \item $a^1_v$ is not envious in both allocations.
            \item $a^j_u$ did not have their preferred house before the swap. Since the neighbourhood of $a^j_u$ is unaffected by the swap, $a^j_u$ cannot become envious after the swap if $a^j_u$ was not envious before the swap.
            \item For all other agents, they are envious after the swap if and only if they were envious before the swap. This is because the set of houses allocated to their neighbours remain unchanged.
        \end{enumerate}
        \item \textbf{Case III: $h_v$ is assigned to $a_e$ for some $e \in E$.} $a^1_v, a^2_v \ldots, a^\delta_v$ must be envious of $a_e$ in $\phi$. We now look at the sequence of agents $a^{1}_{v_0} = a^1_v, a^{1}_{v_1}, \ldots, a^{1}_{v_{p-1}}, a^{1}_{v_p}$ such that 
        \begin{itemize}
            \item $\phi(a_e) = h_{v_0}$.
            \item $\phi(a^{1}_{v_0}) = h_{v_1}, \phi(a^{1}_{v_1}) = h_{v_2}, \ldots, \phi(a^{1}_{v_{p-1}}) = h_{v_p}$.
            \item $\phi(a^{1}_{v_p}) = h^*_q$ is a dummy house.
        \end{itemize}

        Note that such a sequence must exist and end with an agent who is assigned a dummy house, as the number of agents and houses are finite. Therefore, $p$ is a finite integer. Now, we modify the allocation to $\phi'$ as follows:

        \begin{itemize}
            \item $\phi'(a_e) = h^*_q$.
            \item $\phi'(a^{1}_{v_0}) = h_{v_0}, \phi'(a^{1}_{v_1}) = h_{v_1}, \ldots, \phi'(a^{1}_{v_{p-1}}) = h_{v_{p-1}}, \phi'(a^{1}_{v_p}) = h_{v_p}$.
            \item $\phi'(a) = \phi(a)$ for all other agents $a$.
        \end{itemize}

        We will now argue why this set of exchanges do not increase the number of envious agents.
        \begin{itemize}
            \item \textsl{Case A: $e$ is incident on $v$.} Let $e = e_1, e_2, \ldots, e_\delta$ be the edges incident on $v$. In $\phi$, $a_e$ was not envious, but agents $a_{e_2}, \ldots, a_{e_\delta}$ may or may not be envious. The agents $a^1_v, a^2_v, \ldots, a^\delta_v$ were all envious in $\phi$. In $\phi'$, $a^1_v, a^2_v, \ldots, a^\delta_v$ are not envious, agent $a_e$ becomes envious, and agents $a_{e_2}, \ldots, a_{e_\delta}$ may become envious. Therefore, the number of envious agents in $\{a_e, a_{e_2}, a_{e_3}, \ldots, a_{e_\delta}\} \cup \{a^1_v, a^2_v, \ldots, a^\delta_v\}$ do not increase. The agents $a^1_{v_1}, a^1_{v_2}, \ldots, a^1_{v_p}$ all become non-envious in $\phi'$ as they get their preferred houses. For all other agents $a^j_u$ where $u$ is a vertex in $V$, the dummy house $h^*_q$ is the only house which appears in the set of houses assigned to neighbours of $a^j_u$ in $\phi'$ which did not appear in $\phi$; hence $a^j_u$ is envious in $\phi'$  only if they were envious in $\phi$. Finally, all other agents $a_{e'}$ where $e'$ is an edge in $E$, are envious in $\phi'$ if and only if they were envious in $\phi$ as only $h_v$ (a house not preferred by $a_{e'}$) gets added to the set of houses assigned to their neighbours. 
            \item \textsl{Case B: $e$ is not incident on $v$.} Let $e_1, e_2, \ldots, e_\delta$ be the edges incident on $v$. In $\phi$, the agents $a_{e_1}, a_{e_2}, \ldots, a_{e_\delta}$ may or may not be envious; but the agents $a^1_v, a^2_v, \ldots, a^\delta_v$ were all envious in $\phi$. In $\phi'$, $a^1_v, a^2_v, \ldots, a^\delta_v$ are not envious, while agents $a_{e_1}, a_{e_2}, \ldots, a_{e_\delta}$ may become envious. Therefore, the number of envious agents in $\{a_{e_1}, a_{e_2}, a_{e_3}, \ldots, a_{e_\delta}\} \cup \{a^1_v, a^2_v, \ldots, a^\delta_v\}$ do not increase. The agents $a^1_{v_1}, a^1_{v_2}, \ldots, a^1_{v_p}$ all become non-envious in $\phi'$ as they get their preferred houses. If the agent $a_e$ is not envious in $\phi$, then it cannot become envious in $\phi'$ as only $h_{v}$ (a house not preferred by $a_e$) gets added to the set of houses assigned to its neighbours. For all other agents $a^j_u$ where $u$ is a vertex in $V$, the dummy house $h^*_q$ is the only house which appears in the set of houses assigned to neighbours of $a^j_u$ in $\phi'$ which did not appear in $\phi$; hence $a^j_u$ is envious in $\phi'$ only if they were envious in $\phi$. Finally, all other agents $a_{e'}$ where $e'$ is an edge in $E$, are envious in $\phi'$ if and only if they were envious in $\phi$ as only $h_v$ (a house not preferred by $a_{e'}$) gets added to the set of houses assigned to their neighbours. 
        \end{itemize}
    \end{itemize}

    These exchanges do not increase the total envy, but they allocate at least one house $h_v$ to $a^1_v$, which was previously allocated to some other agent. Repeating this for at most $|\HH|$ times would give us a nice allocation, with envy no more than that in $\phi$.

    Hence, we can safely assume that $\phi$ is a nice allocation with envy at most $k\delta - \binom{k}{2}$. $\phi$ allocates dummy houses to $a_e$ for all $e \in E$ and to $a^2_v, a^3_v, \ldots, a^\delta_v$ for all $v \in V$. Moreover, it is safe to assume that all dummy houses are assigned to some agent, otherwise we can assign a dummy house to any $a^1_v$ without increasing total envy.

    We now look into agents which are not assigned dummy vertices by $\phi$. There are $|\AA| - (\delta N + M - k) = (\delta N + M) - (\delta N + M -k) = k$ many such agents, all of the form $a^1_v$, $v \in V$. Let $a^{1}_{v_1}, a^{1}_{v_2}, \ldots, a^{1}_{v_k}$ be the agents which are not assigned dummy houses. For any $e \in E$, $e$ is incident on least one vertex in $v_1, v_2, \ldots, v_k$, if and only if the agent $a_e$ is envious. Therefore, the number of envious agents is precisely the number of edges which are incident on the vertices $v_1, v_2, \ldots, v_k$. Let $\gamma$ be the number of edges with both endpoints in $\{v_1, v_2, \ldots, v_k\}$. The number of edges incident on $\{v_1, v_2, \ldots, v_k\}$ is therefore equal to $k\delta - \gamma$. By assumption, the number of envious agents is at most $k\delta - \binom{k}{2}$. Thus $\gamma \ge \binom{k}{2}$, implying $v_1, v_2, \ldots, v_k$ is a clique.

    This completes the proof of correctness of the reduction.
\end{proof}

\section{Proof of Theorem~\ref{thm:3reg-hard}}
\begin{proof}
    We provide a reduction from the $1/2$-Vertex Separator problem~\cite{muller1991alpha} on 3-regular graphs to  \ohaa. The problem takes input a graph $G(V,E)$ and an integer $k$ and asks if there is a subset $S$ of $V$ that $|S| \le k$ such that $V \setminus S$ can be partitioned into two equal-sized subsets $S_1$, $S_2$ such that no edge connects a vertex in $S_1$ with a vertex in $S_2$. This problem is shown to be \NP-complete even for 3-regular graphs by M{\"u}ller and Wagner~\cite{muller1991alpha}.

    Let $(G(V,E),k)$ be an instance of $1/2$-Vertex Separator problem such that $G(V,E)$ is 3-regular. Let $V = \{a_1, a_2, \ldots, a_n\}$. We create an instance $(\AA, \HH, \GG_\AA(\AA, E_\AA), (\PP_a)_{a \in \AA}, 2\left\lfloor \frac k 2 \right\rfloor)$ of \ohaa as follows:

    \begin{itemize}
        \item $\AA = V = \{a_1, a_2, \ldots, a_n\}$.
        \item $\HH = \{h_1, h_2, \ldots, h_n\}$.
        \item $E_\AA = E$.
        \item Let $\HH^* = \{h_1, h_2, \ldots, h_t\}$, where $t = \frac{n}{2} - \left\lfloor \frac{k}{2} \right\rfloor$. $\PP_a = \HH^*$ for all $a \in \AA$. 
    \end{itemize}

    Intuitively the underlying agent graph is the same as $G$, and every agent prefers the same $t = \frac{n}{2} - \left\lfloor \frac{k}{2} \right\rfloor$ houses. Note that $t$ is an integer since the number of vertices of $G$, $n$, is even as $G$ is 3-regular. This reduction can be computed in polynomial time. We now show the correctness of the reduction. 
    
    \begin{claim}\label{clm:half-vs-equiv}
        $(G, k)$ and $(G, 2 \left\lfloor \frac{k}{2} \right\rfloor)$ are equivalent instances of the $1/2$-Vertex Separator problem, if $G$ is 3-regular.
    \end{claim}
    \begin{proof}
        This holds trivially when $k$ is even. Further, if $(G, 2 \left\lfloor \frac k 2 \right\rfloor)$ is a Yes-instance, then so is $(G, k)$ as $k \ge 2 \left\lfloor \frac k 2 \right\rfloor$.
        
        Now, say $k$ is odd. and let $(G, k)$ be a Yes-instance. Let $k = 2\lambda + 1$ for $\lambda \in \mathbb{Z}$. Note that since $G$ is 3-regular, $n = |V|$ must be even. Let $S$ be any separator that separates $V\setminus S$ into equal sized, disjoint, and non-adjacent parts $S_1$, $S_2$ and satisfying $|S| \le k$. Therefore, $|S_1| = |S_2|$. Hence $|V| - |S| = |S_1| + |S_2|$, or $|S| = n - 2|S_1|$, an even integer: say $|S| = 2 \beta$, for some integer $\beta$. Therefore, $2\beta \le 2\lambda + 1 \implies \beta \le \lambda + 0.5$. Since $\beta, \lambda$ are integers, we must have $\beta \le \lambda$; implying $|S| = 2\beta \le 2\lambda = 2 \left\lfloor \frac k 2 \right\rfloor$. Hence $(G, 2 \left\lfloor \frac k 2 \right\rfloor)$ is also a Yes-instance.  
    \end{proof} 

    \begin{claim}\label{clm:half-exact}
        If $(G(V,E), 2 \left\lfloor \frac{k}{2} \right\rfloor)$ is a Yes-instance of the $1/2$-Vertex Separator problem, where $G$ is 3-regular, there exists $S \subseteq V$ of size $|S| = 2 \left\lfloor \frac k 2 \right\rfloor$ that separates $V \setminus S$ into equal sized, disjoint, and non-adjacent parts.
    \end{claim}
    \begin{proof} 
        Let $S'$ be any separator that separates $V \setminus S$ into two equal sized, disjoint, and non-adjacent parts $S_1', S_2'$, such that $|S'| \le 2 \left\lfloor \frac k 2 \right\rfloor$. Such an $S'$ exists as $(G(V,E), 2 \left\lfloor \frac k 2 \right\rfloor)$ is a Yes-instance. 

        Let $S_1' = \{x_1, x_2, \ldots, x_s\}$ and $S_2' = \{y_1, y_2, \ldots, y_s\}$ for some $s$, where $x_i$ is not adjacent to $y_j$ for all $i,j \in [s]$. Therefore, $|S'| = n - 2s$, an even number. This gives us

        $$n - 2s \le 2 \left\lfloor \frac k 2 \right\rfloor \implies s - \frac n 2 + \left\lfloor \frac k 2 \right\rfloor \ge 0 $$

        Define $\gamma = s - \frac n 2 + \left\lfloor \frac k 2 \right\rfloor$, note that $0 \le \gamma \le s$. We now construct our target separator $S$.

        \begin{itemize}
            \item define $S_1 = \{x_{\gamma + 1}, x_{\gamma + 2}, \ldots, x_s\}$
            \item define $S_2 = \{y_{\gamma + 1}, y_{\gamma + 2}, \ldots, y_s\}$
            \item define $S = S' \cup \{x_1, x_2, \ldots, x_{\gamma}\} \cup \{y_1, y_2, \ldots, y_\gamma\}$
        \end{itemize}
         
        We have $S \cup S_1 \cup S_2 = V$, and $S$ separators $V \setminus S$ into equal sized, disjoint, and non-adjacent subsets $S_1$ and $S_2$. Moreover, 
        
        $$|S| = |S'| + 2\gamma = (n - 2s) + 2 \left(s - \frac n 2 + \left\lfloor \frac k 2 \right\rfloor\right) = 2 \left\lfloor \frac k 2 \right\rfloor$$

        This completes the proof of Claim~\ref{clm:half-exact}.
    \end{proof}

    \textbf{Reverse direction.}
    We first show if the \ohaa instance $(\AA, \HH, \GG_\AA(\AA, E_\AA), (\PP_a)_{a \in \AA}, 2 \left\lfloor \frac k 2 \right\rfloor)$ is a Yes-instance, then so is the $1/2$-Vertex Separator instance of $(G,k)$. Let $\phi$ be an allocation with at most $2\left\lfloor \frac k 2 \right\rfloor$ envious agents. Let $X \subseteq \AA$ defined by $X = \{a \in \AA \mid \phi(a) \in \HH^*\}$. Therefore, $|X| = t = \frac{n}{2} - \left\lfloor \frac k 2 \right\rfloor$. Let $J \subseteq \AA$ be the set of envious agents. By assumption, $|J| \le 2 \left\lfloor \frac k 2 \right\rfloor$. Let $Y = \AA \setminus (X \cup J)$, be the set of non-envious agents outside $X$.
    
    Observe that no edge in $E_\AA$ can connect an agent in $X$ to an agent in $Y$; otherwise the corresponding agent in $Y$ would be envious of the corresponding agent in $X$. Note that $|Y \cup J| = \frac n 2 + \left\lfloor \frac k 2 \right\rfloor \ge 2 \left\lfloor \frac k 2 \right\rfloor$. Let $J'$ be an arbitrary subset of $Y \cup J$ and a superset of $J$ satisfying $|J'| = 2 \left\lfloor \frac k 2 \right\rfloor$. Then $J'$ is separates $V = \AA$ into $X$ and $(Y \setminus J')$ satisfying $|X| = |Y \setminus J'| = \frac n 2 - \left\lfloor \frac k 2 \right\rfloor$. Therefore, $(G, 2\left\lfloor \frac k 2 \right\rfloor)$ is a Yes-instance of the $1/2$-Vertex Separator problem, and so is $(G,k)$ (by Claim~\ref{clm:half-vs-equiv}).
    
    \textbf{Forward direction.}    
    On the other hand, if $(G, k)$ is a Yes-instance for the $1/2$-Vertex Separator problem, then so is $(G, 2 \left\lfloor \frac k 2 \right\rfloor)$ (by Claim~\ref{clm:half-vs-equiv}). By Claim~\ref{clm:half-exact}, there exists a partition $S, X, Y$ of $\AA$ such that $|S| = 2 \left\lfloor \frac k 2 \right\rfloor$, $|X| = |Y| = \frac n 2 - \left\lfloor \frac k 2 \right\rfloor$, and there is no edge between $X$ and $Y$ in $\GG_\AA$. We create an allotment $\phi$ defined as follows:
    \begin{itemize}
        \item Assign all houses in $\HH^*$ to agents in $X$ (arbitrarily).
        \item Assign all other houses to all other agents (arbitrarily).
    \end{itemize}
    The only envious agents can be the ones in $S$. Hence there are at most $2 \left\lfloor \frac k 2 \right\rfloor$ many envious agents. Therefore, the \ohaa instance $(\AA, \HH, \GG_\AA(\AA, E_\AA), (\PP_a)_{a \in \AA}, 2 \left\lfloor \frac k 2 \right\rfloor)$ is a Yes-instance.

    This completes the correctness of the reduction.
\end{proof}

\section{Proof of Theorem~\ref{thm:vc-bip}}

\begin{proof}
    We reduce from the \NP-hard problem of \CLIQUE. Let $(G(V,E), k)$ be an instance of \CLIQUE, where the problem asks to decide if $G$ has a clique of size $k$. Let $|V| = N$ and $|E| = M$. Moreover, we can assume $N \ge (2M)^{\lceil 1 / \varepsilon \rceil}$; indeed if that is not the case, we can add (in polynomial time), $(2M) ^ {\lceil 1 / \varepsilon \rceil}$ many isolated vertices to $G$ without changing the size of the maximum clique.
    
    We reduce $(G(V,E), k)$ to an instance of \ohaa as follows. 

    \begin{itemize}
        \item Let $\AA = \{a_v \mid v \in V\} \cup \{a^1_e \mid e \in E\} \cup \{a^2_e \mid e \in E\}$. That is, we have an agent $a_v$ for every vertex $v$ of $G$, and two agents $a^1_e$, and $a^2_e$ for each edge $e$ of $G$. Therefore, $n = |\AA| = N + 2M$. 
        \item Let $\HH = \{h_e \mid e \in E\} \cup \{h^*_j \mid j \in [N+2M-\binom{k}{2}]\}$. Therefore, $|\HH| = M + N + 2M - \binom{k}{2} = |\AA| + (M-\binom{k}{2})$. 
        \item We are now going to set the houses $h_e$ to be preferred by agents $a^1_e$ and $a^2_e$, and also by $a_u$ and $a_v$ where $e = \{u,v\}$. This is effectively done by setting $\PP_{a^1_e} = \PP_{a^2_e} = \{h_e\}$, and $\PP_{a_v} = \{h_{e'} \mid e' \in E \text{ and } v \in e'\}$. The houses $\{h^*_j \mid j \in [N+2M - \binom{k}{2}]\}$ are preferred by no agents, i.e. these are dummy houses.
        \item We now define the underlying graph $\GG_\AA(\AA, E_\AA)$. Let $\AA_V = \{a_v \mid v \in V \}$ and let $\AA_E = \{a^1_e \mid e \in E\} \cup \{a^2_e \mid e \in E\}$. Then $\GG_\AA$ defines the complete bipartite graph between $\AA_V$ and $\AA_E$. That is, $E_\AA = \{\{a,a'\} \mid a \in \AA_V, a' \in \AA_E\}$. 
        \item The target number of envious agents is $k$.
    \end{itemize}
    Notice that the part $\AA_E$ of the complete bipartite graph $\GG_\AA$ satisfies $|\AA_E| = 2M \le N^\varepsilon = |\AA_V|^\varepsilon \le |\AA|^\varepsilon$. This reduction can be done in polynomial time. We now show the correctness of the reduction.

    \textbf{Forward direction.} 
    Let $(G(V, E), k)$ be a Yes-instance of \CLIQUE, i.e., there is clique $S \subseteq V$ of size $k$. Let $E_S$ be the edges of the clique $S$, $|E_S| = \binom{k}{2}$. Consider the allocation $\phi$ defined as:
    \begin{itemize}
        \item A dummy house is assigned to $a_v$ for all $v \in V$. This uses $N$ dummy houses.
        \item A dummy house is assigned to $a^2_e$ for all $e \in E$. This uses $M$ dummy houses.
        \item A dummy house is assigned to $a^1_e$ for every $e \in E \setminus E_S$. This uses $(M - \binom{k}{2})$ dummy houses. 
        \item For every $e \in E_S$,the house $h_e$ is assigned to $a^1_e$. Hence the houses $\{h_e \mid e \in E \setminus E_S\}$ are all unallocated.
    \end{itemize}

    Note that for every $e \in E$, the agents $a^1_e$ and $a^2_e$ are all non-envious, because their neighbours in $\GG_\AA$, i.e. the agents $\{a_v \mid v \in V\}$ are all assigned dummy houses. Moreover, for all $v \in V$, such that $v \notin S$, no edge in $E_S$ is incident to $v$; thus all houses preferred by $a_v$ are unallocated, making $a_v$ non-envious. Therefore, all envious agents are in $\{a_v \mid v \in S\}$. This makes the number of envious agents in $\phi$ to be at most $|S| = k$; the \ohaa instance is a Yes-instance.

    \textbf{Reverse direction.}
    Let $\phi$ be some allocation with at most $k$ envious agents. We say that $\phi$ is nice if for all $e \in E$, $h_e$ is either unassigned or is assigned to $a^1_e$. If $\phi$ is not nice to begin with, then $\phi$ can be converted to a nice allocation by a sequence of exchanges, none of which increases total envy. 
    
    Let $h_e$ be some house which is neither unallocated, nor allocated to $a^1_e$. 
    
    \begin{itemize}
        \item \textbf{Case I: $h_e$ is assigned to $a^2_e$.} We swap the houses allocated to $a^1_e$ and $a^2_e$. This does not change the number of envious agents, due to the symmetry of $\GG_\AA$.
        \item \textbf{Case II: $h_e$ is assigned to $a^j_{e'}$ for some $j \in \{1,2\}$, $e' \ne e$.} We swap the houses allocated to $a^1_e$ and $a^j_{e'}$. Again, this does not increase the number of envious agents. This is because
        \begin{enumerate}
            \item $a^1_e$ is not envious in both allocations.
            \item $a^j_{e'}$ did not have their preferred house before the swap. Since the neighbourhood of $a^j_{e'}$ is unaffected by the swap, $a^j_{e'}$ cannot become envious after the swap, if $a^j_{e'}$ was not envious before the swap.
            \item For all other agents, they are envious after the swap if and only if they were envious before the swap. This is because the set of houses allocated to their neighbours remain unchanged.
        \end{enumerate}
        \item \textbf{Case III: $h_e$ is assigned to $a_v$ for some $v \in V$.} Agents $a^1_e, a^2_e$ must be envious of $a_v$ in $\phi$. We now look at the sequence of agents $a^{1}_{e_0} = a^1_e, a^{1}_{e_1}, \ldots, a^{1}_{e_{p-1}}, a^{1}_{e_p}$ such that 
        \begin{itemize}
            \item $\phi(a_v) = h_{e_0}$.
            \item $\phi(a^{1}_{e_0}) = h_{e_1}, \phi(a^{1}_{e_1}) = h_{e_2}, \ldots, \phi(a^{1}_{e_{p-1}}) = h_{e_p}$.
            \item $\phi(a^{1}_{e_p}) = h^*_q$ is a dummy house.
        \end{itemize}

        Note that such a sequence must exist and end with an agent who is assigned a dummy house, as the number of agents and houses are finite. Therefore, $p$ is a finite integer. Now, we modify the allocation to $\phi'$ as follows:

        \begin{itemize}
            \item $\phi'(a_v) = h^*_q$.
            \item $\phi'(a^{1}_{e_0}) = h_{e_0}, \phi'(a^{1}_{e_1}) = h_{e_1}, \ldots, \phi'(a^{1}_{e_{p-1}}) = h_{e_{p-1}}, \phi'(a^{1}_{e_p}) = h_{e_p}$.
            \item $\phi'(a) = \phi(a)$ for all other agents $a$.
        \end{itemize}

        We will now argue why this set of exchanges do not increase the number of envious agents.
        \begin{itemize}
            \item \textsl{Case A: $e$ is incident on $v$.} Let $e = \{u,v\}$. In $\phi$, $a_v$ was not envious, but $a_u$ may or may not be envious. The agents $a^1_e$, and $a^2_e$ were both envious. In $\phi'$, $a^1_e$ and $a^2_e$ are not envious, agent $a_v$ becomes envious, and $a_u$ may become envious. Therefore, the number of envious agents in $\{a^1_e, a^2_e, a_u, a_v\}$ do not increase. The agents $a^1_{e_1}, a^1_{e_2}, \ldots, a^1_{e_p}$ all become non-envious in $\phi'$ as they get their preferred houses. For all other agents $a^j_{e'}$ where $e'$ is an edge in $E$, the dummy house $h^*_q$ is the only house which appears in the set of houses assigned to neighbours of $a^j_{e'}$ in $\phi'$ which did not appear in $\phi$; hence $a^j_{e'}$ is envious in $\phi'$ only if they were envious in $\phi$. Finally, all other agents $a_{v'}$ where $v'$ is a vertex in $V$, are envious in $\phi'$ if and only if they are envious in $\phi$ as only the house $h_v$ (a house not preferred by $a_{v'}$) gets added to the set of houses assigned to the neighbours of $a_{v'}$. 
            \item \textsl{Case B: $e$ is not incident on $v$.} Let $e=\{u_1, u_2\}$. In $\phi$, the agents $a_{u_1}$, $a_{u_2}$ may or may not be envious; but the agents $a^1_e$ and $a^2_e$ were both envious. In $\phi'$, $a^1_e$ and $a^2_e$ are not envious, while agents $a_{u_1}$ and $a_{u_2}$ may become envious. Therefore, the number of envious agents in $\{a_{u_1}, a_{u_2}, a^1_e, a^2_e\}$ do not increase. The agents $a^1_{e_1}, a^1_{e_2}, \ldots, a^1_{e_p}$ all become non-envious in $\phi'$ as they get their preferred houses. If the agent $a_v$ is not envious in $\phi$, then it cannot become envious in $\phi'$ as only $h_{e}$ (a house not preferred by $a_v$) gets added to the set of houses assigned to its neighbours. For all other agents $a^j_{e'}$ where $e'$ is an edge in $E$, the dummy house $h^*_q$ is the only house which appears in the set of houses assigned to neighbours of $a^j_{e'}$ in $\phi'$ which did not appear in $\phi$; hence $a^j_{e'}$ is envious in $\phi'$ only if they were envious in $\phi$. Finally, all other agents $a_{v'}$ where $v'$ is a vertex in $V$, are envious in $\phi'$ if and only if they were envious in $\phi$ as only the house $h_v$ (a house not preferred by $a_{v'}$) gets added to the set of houses assigned to the neighbours of $a_{v'}$. 
        \end{itemize}
    \end{itemize}
    These exchanges do not increase the total envy, but they allocate at least one house $h_e$ to $a^1_e$, which was previously allocated to some other agent. Repeating this for at most $|\HH|$ times would give us a nice allocation, with envy no more than that in $\phi$.

    Hence, we can safely assume that $\phi$ is a nice allocation with envy at most $k$. $\phi$ allocates dummy houses to $a_v$ for all $v \in V$ and to $a^2_e$ for all $e \in E$. Moreover, it is safe to assume that all dummy houses are assigned to some agent, otherwise we can assign a dummy house to any $a^1_e$ without increasing total envy.

    We now look into agents which are not assigned dummy vertices by $\phi$. There are $N + 2M - (N + 2M - \binom{k}{2}) = \binom{k}{2}$ many such agents, each of the form $a^1_e$ for some $e \in E$. Let $\gamma = \binom{k}{2}$. Let $a^{1}_{e_1}, a^{1}_{e_2}, \ldots, a^{1}_{e_\gamma}$ be the agents which are not assigned dummy houses. For any $v \in V$, $v$ is an endpoint of at least one edge in $e_1, e_2, \ldots, e_\gamma$, if and only if the agent $a_v$ is envious. Therefore, the number of envious agents is precisely the number of vertices on which the edges $e_1, e_2, \ldots, e_\gamma$ are incident on. The number of vertices that $\binom{k}{2}$ distinct edges are incident on, is at least $k$, with equality holding if and only if the edges induce a clique of size $k$. Further, the number of envious agents, i.e. the number of vertices on which the edges $e_1, e_2, \ldots, e_\gamma$ are incident on, is at most $k$ by assumption. Therefore indeed, $\{e_1, e_2, \ldots, e_\gamma\}$ is a set of $\binom{k}{2}$ distinct edges, which induce a clique of size $k$. 

    This completes the proof of correctness of the reduction.
\end{proof}

\section{NP-Hardness On Split Graphs with Bounded Vertex Cover}\label{App:vc-split}

\begin{theorem}\label{thm:vc-split}
    Let $\varepsilon \in (0, 1)$ be any constant. Consider an input $(\AA, \HH, \GG_\AA(\AA, E_\AA), (\PP_a)_{a \in \AA})$ with $m$ houses and $n$ agents. \ohaa is \NP-hard even when $\GG_\AA$ is a split graph where the maximum clique is of size at most $n^\varepsilon$, and each house is preferred by at most three agents.
\end{theorem}

\begin{proof}
    We provide a reduction from the \CLIQUE~problem to \ohaa. Let $(G(V,E), k)$ be an instance of \CLIQUE, which asks to decide whether $G$ has a clique of size $k$. 

    Let $|V| = N, |E| = M \ge 1$. Define $t = N ^{\lceil 1 / \varepsilon \rceil}$. We now construct an instance of \ohaa as follows:
    
    \begin{itemize}
        \item Let $\AA = \{a_v \mid v \in V\} \cup \{a^j_e \mid e \in E, j \in [t]\}$. That is, we have an agent $a_v$ for every vertex $v$ of $G$, and $t$ agents $a^1_e, a^2_e, \ldots, a^t_e$ for each edge $e$ of $G$. Therefore, $n = |\AA| = N + M \cdot t = N + M \cdot N ^{\lceil 1/\varepsilon \rceil}$. 
        \item Let $\HH = \{h^j_e \mid e \in E, j \in [t]\} \cup \{h^*_j \mid j \in [(M - \binom{k}{2})t + N]\}$. Therefore, $|\HH| = M \cdot t + N + (M - \binom{k}{2})t = |\AA| + (M - \binom{k}{2})t$. 
        \item We are now going to set the houses $h^j_e$ to be preferred by agents $a^j_e$, $a_{u}, a_{v}$, where $e = \{u, v\}$. This is effectively done by setting $\PP_{a^j_e} = \{h^j_e\}$, and $\PP_{a_v} = \{h^j_e \mid j \in [t], e \in E \text{ such that } v \in e\}$. The houses $\{h^*_j \mid j \in [(M - \binom{k}{2})t + N]\}$ are preferred by no agents, i.e., these are dummy houses.
        \item We now define the underlying graph $\GG_\AA(\AA, E_\AA)$. The edges in $E_\AA$ are defined as follows:
        \begin{itemize}
            \item For every $u, v \in V$, $\{a_u, a_v\} \in E_\AA$.
            \item For every $v \in V, e \in E, j \in [t]$, $\{a_{v}, a^j_e\} \in E_\AA$.
        \end{itemize}
        \item The target number of envious agents is $k$.
    \end{itemize}

    This means that $C = \{a_v \mid v \in V\}$ forms a clique and $I = \{a^j_e \mid e \in E, j \in [t]\}$ forms an independent set with $\AA = C \cup I$. Moreover, $|C| = N < (N + M\cdot N^{\lceil 1 / \varepsilon \rceil})^\varepsilon = n^\varepsilon$. The output of the reduction is the \ohaa instance $(\AA, \HH, \GG_\AA, (\PP_a)_{a \in \AA}, k)$. This reduction can be done in polynomial time as $\varepsilon$ is a constant. We now show the correctness of this reduction.

    \textbf{Forward direction.}
    Let $(G(V, E), k)$ be a Yes-instance of \CLIQUE, i.e., there is a clique $S \subseteq V$ of size $k$. Let $E_S$ be the edges of the clique $S$. We have $|E_S| = \binom{k}{2}$. Consider the allocation $\phi$ defined as:
    \begin{itemize}
        \item A dummy house is assigned to $a_v$ for all $v \in V$. This uses $N$ dummy houses.
        \item A dummy house is assigned to $a^j_e$ for every $j \in [t], e \in E \setminus E_S$. This uses $(M - \binom{k}{2})t$ dummy houses. 
        \item For every $j \in [t], e \in E_S$, the house $h^j_e$ is assigned to $a^j_e$. Hence the houses $\{h^j_e \mid j \in [t], e \in E \setminus E_S\}$ are all unallocated.
    \end{itemize}

    Note that for every $e \in E$, the agents $a^j_e$ are all non-envious, because their neighbours in $\GG_\AA$, i.e., the agents $\{a_v \mid v \in V\}$ are all assigned dummy houses. Moreover, for all $v \in V$, such that $v \notin S$, no edge in $E_S$ is incident to $v$; thus, all houses preferred by $a_v$ are unallocated, making $a_v$ non-envious. Therefore, all envious agents are in $\{a_v \mid v \in S\}$. This makes the number of envious agents in $\phi$ to be at most $|S| = k$; the \ohaa instance is a Yes-instance. 

    \textbf{Reverse direction.}
    Let $\phi$ be some allocation with at most $k$ envious agents. We say that $\phi$ is nice if for all $j \in [t], e \in E$, $h^j_e$ is either unassigned or is assigned to $a^j_e$. If $\phi$ is not nice to begin with, then $\phi$ can be converted to a nice allocation by a sequence of exchanges, none of which increases total envy. 
    
    Let $h^j_e$ be some house which is neither unallocated, nor allocated to $a^j_e$. 
    
    \begin{itemize}
        \item \textbf{Case I: $h^j_e$ is allocated to $a^{j'}_{e'}$ for some $j' \ne j$ or $e' \ne e$.} Then we swap houses allocated to $a^j_e$ and $a^{j'}_{e'}$. After this swap, $a^j_e$ is non-envious by getting a preferred house. The agent $a^{j'}_{e'}$ cannot become envious after the swap if $a^{j'}_{e'}$ was not envious before the swap; this is because the houses allocated to the neighbours of $a^{j'}_{e'}$ remain unchanged and $a^{j'}_{e'}$ did not have a preferred house before the swap. For all other agents, they remain envious after the swap if and only if they were envious before the swap (as the set of houses allocated to their neighbours is unchanged).
        \item \textbf{Case II: $h^j_e$ is allocated to $a_v$ for some $v \in V$.} $a^j_e$ must be envious of $a_v$ in $\phi$. We now look at the sequence of agents $a^j_e = a^{j_0}_{e_0}, a^{j_1}_{e_1}, \ldots, a^{j_{p-1}}_{e_{p-1}}, a^{j_p}_{e_p}$ such that 
        \begin{itemize}
            \item $\phi(a_v) = h^{j_0}_{e_0}$.
            \item $\phi(a^{j_0}_{e_0}) = h^{j_1}_{e_1}, \phi(a^{j_1}_{e_1}) = h^{j_2}_{e_2}, \ldots, \phi(a^{j_{p-1}}_{e_{p-1}}) = h^{j_p}_{e_p}$.
            \item $\phi(a^{j_p}_{e_p}) = h^*_q$ is a dummy house.
        \end{itemize}
        Note that such a sequence must exist and end with an agent who is assigned a dummy house, as the number of agents and houses are finite. Therefore, $p$ is a finite integer. Now, we modify the allocation to $\phi'$ as follows:

        \begin{itemize}
            \item $\phi'(a_v) = h^*_q$.
            \item $\phi'(a^{j_0}_{e_0}) = h^{j_0}_{e_0}, \phi'(a^{j_1}_{e_1}) = h^{j_1}_{e_1}, \ldots, \phi'(a^{j_{p-1}}_{e_{p-1}}) = h^{j_{p-1}}_{e_{p-1}}, \phi'(a^{j_p}_{e_p}) = h^{j_p}_{e_p}$.
            \item $\phi'(a) = \phi(a)$ for all other agents $a$.
        \end{itemize}

        In $\phi'$, $a^{j}_e$ is not envious anymore, however, $a_v$ may or may not become envious. The agents $a^{j_1}_{e_1}, \ldots, a^{j_{p-1}}_{e_{p-1}}, a^{j_p}_{e_p}$, were allocated their preferred house in $\phi'$, and hence are not envious. Moreover, for all other $a^{j'}_{e'}$, they are envious in $\phi'$ if and only if they were envious in $\phi$; this is because the dummy house $h^*_q$ is the only house that appears in the set of houses allocated to their neighbours in $\phi'$ but not in $\phi$. The same happens for $a_{v'}$, $v' \in V \setminus \{v\}$; the set of houses allocated to the neighbours of $a_{v'}$ does not change. Therefore this does not increase total envy.
    \end{itemize}

    These exchanges do not increase the total envy, but they allocate at least one house $h^j_e$ to $a^j_e$, which was previously allocated to some other agent. Repeating this for at most $|\HH|$ times would give us a nice allocation, with envy no more than that in $\phi$.

    Hence, we can safely assume that $\phi$ is a nice allocation with envy at most $k$. $\phi$ allocates dummy houses to $a_v$ for all $v \in V$. Moreover, it is safe to assume that all dummy houses are assigned to some agent, otherwise we can assign a dummy house to any $a^j_e$ without increasing total envy.

    We now look into agents which are not assigned dummy vertices by $\phi$. There are $Mt - (M - \binom{k}{2})t = \binom{k}{2}t$ many such agents. Let $\gamma = \binom{k}{2}t$. Let $a^{j_1}_{e_1}, a^{j_2}_{e_2}, \ldots, a^{j_\gamma}_{e_\gamma}$ be the agents which are not assigned dummy houses. Therefore, the number of distinct edges in $e_1, e_2, \ldots, e_\gamma$ is at least $\gamma / t = \binom{k}{2}$. Moreover, for any $v \in V$, $v$ is an endpoint of at least one edge in $e_1, e_2, \ldots, e_\gamma$, if and only if the agent $a_v$ is envious. Therefore, the number of envious agents is precisely the number of vertices on which the edges $e_1, e_2, \ldots, e_\gamma$ are incident on. The number of vertices that $\binom{k}{2}$ distinct edges are incident on, is at least $k$, with equality holding if and only if the edges induce a clique of size $k$. However, the number of envious agents, i.e., the number of vertices on which the edges $e_1, e_2, \ldots, e_\gamma$ are incident on, is at most $k$ by assumption. Therefore, $\{e_1, e_2, \ldots, e_\gamma\}$ is a set of $\binom{k}{2}$ distinct edges, which induce a clique of size $k$. 

    This completes the proof of correctness of the reduction.
\end{proof}}
\end{document}